\documentclass[12pt]{article}

\usepackage{indentfirst}  
\usepackage{amsfonts,amssymb,amsmath}
\usepackage{bm}         
\usepackage{cite}
\usepackage{url}
\usepackage{enumitem}
\usepackage[dvipsnames]{xcolor} 
\usepackage{MnSymbol}
\usepackage{hyperref}
\usepackage{tensor}
\usepackage{arcs}
\usepackage{tikz-cd}
\usepackage[titletoc,title]{appendix} 
\usepackage{tensor}
\usepackage{upgreek}
\usepackage{titlesec}

 \titlespacing*{\section}{0pt}{3.5ex plus 1ex minus .2ex}{1.2ex plus .2ex}
\usepackage{etoolbox}

\gappto\normalsize{%
  \setlength{\abovedisplayskip}{5pt plus 2pt minus 3pt}%
  \setlength{\belowdisplayskip}{5pt plus 2pt minus 3pt}%
  \setlength{\abovedisplayshortskip}{0pt plus 2pt}%
  \setlength{\belowdisplayshortskip}{3pt plus 1pt minus 1pt}%
}

\usetikzlibrary{arrows}

\AtBeginDocument{}%

\newcommand{\corurl}{red}
\newcommand{\corcite}{ForestGreen}
\newcommand{\corlink}{blue}
\definecolor{dgreen}{cmyk}{1,0,1,0.25}
\newcommand{\dd}{\mathrm{d\!\raisebox{-0.05pt}{\scalebox{.8}[1.0]{I}}}}

\newcommand{\bd}{\mathrm{d}}
\newcommand{\iprod}{\raisebox{8pt}{\scalebox{1}[-2.5]{$\neg\,$}}}

\newcommand{\ee}{\mathsf{e}}

\newcommand{\ddd}{\mathrm{d}}

\newcommand{\Aa}{\mathsf{A}} 
\newcommand{\uu}{\mathsf{u}} 
\newcommand{\iiiprod}{\iprod\!\!\!\!\iprod}

\let\oldiprod\iprod
\renewcommand{\iprod}{\mathbin{\smash{\oldiprod}}}
 \allowdisplaybreaks
\newcommand{\stackmark}[3]{%
  \mathbin{%
    \mathchoice%
      {\ooalign{$\displaystyle#1$\cr\hidewidth\smash{\raisebox{#3}{$\scriptstyle#2$}}\hidewidth\cr}}%
      {\ooalign{$\textstyle#1$\cr\hidewidth\smash{\raisebox{#3}{$\scriptstyle#2$}}\hidewidth\cr}}%
      {\ooalign{$\scriptstyle#1$\cr\hidewidth\smash{\raisebox{#3}{$\scriptscriptstyle#2$}}\hidewidth\cr}}%
      {\ooalign{$\scriptscriptstyle#1$\cr\hidewidth\smash{\raisebox{#3}{$\scriptscriptstyle#2$}}\hidewidth\cr}}%
  }%
}

\newcommand{\doublewedgebase}{%
  \raisebox{0ex}{\scalebox{0.75}{$\wedge$}}\mkern-10mu\wedge%
}

\newcommand{\wed}{\stackmark{\wedge}{\cdot}{-0.2ex}}
 
\newcommand{\wwed}{%
  \mathbin{%
    \mathchoice%
      {\ooalign{$\displaystyle\doublewedgebase$\cr\hidewidth\raisebox{-0.2ex}{\kern-0.1em$\scriptstyle\cdot$}\hidewidth\cr}}%
      {\ooalign{$\textstyle\doublewedgebase$\cr\hidewidth\raisebox{-0.2ex}{\kern-0.21em$\scriptstyle\cdot$}\hidewidth\cr}}%
      {\ooalign{$\scriptstyle\doublewedgebase$\cr\hidewidth\raisebox{-0.2ex}{\kern-0.05em$\scriptscriptstyle\cdot$}\hidewidth\cr}}%
      {\ooalign{$\scriptscriptstyle\doublewedgebase$\cr\hidewidth\raisebox{-0.2ex}{\kern-0.05em$\scriptscriptstyle\cdot$}\hidewidth\cr}}%
  }%
}

\newcommand{\iproddot}{%
  \mathbin{%
    \mathchoice%
      {\ooalign{$\displaystyle\iprod$\cr\hidewidth\raisebox{-0.8ex}{\kern-0.05em$\scriptstyle\cdot$}\hidewidth\cr}}%
      {\ooalign{$\textstyle\iprod$\cr\hidewidth\raisebox{-0.8ex}{\kern-0.05em$\scriptstyle\cdot$}\hidewidth\cr}}%
      {\ooalign{$\scriptstyle\iprod$\cr\hidewidth\raisebox{-0.6ex}{\kern-0.03em$\scriptscriptstyle\cdot$}\hidewidth\cr}}%
      {\ooalign{$\scriptscriptstyle\iprod$\cr\hidewidth\raisebox{-0.5ex}{\kern-0.03em$\scriptscriptstyle\cdot$}\hidewidth\cr}}%
  }%
}

\hypersetup{linktocpage,colorlinks,urlcolor=\corurl,citecolor=\corcite,linkcolor=\corlink,
pdftitle={Hamiltonian analysis of Carrollian Gravity with the Immirzi parameter},
pdfauthor={J.F. Barbero, J. Margalef-Bentabol, V. Varo, E.J.S. Villase\~nor}}

\numberwithin{equation}{section}  

\newtheorem{theorem}{Theorem}[section]

\def\QED{{\boldmath$\rule{0.5em}{0.5em}$}}                                
\def\markatright#1{\leavevmode\unskip\nobreak\quad\hspace*{\fill}{#1}}    
\def\qed{\markatright{\QED}}                                              

\newtheorem{lemma}[theorem]{Lemma}
\newenvironment{proof}[1][Proof]{\noindent\textbf{#1.} }{\qed\\}
\newtheorem{proposition}[theorem]{Proposition}

\usepackage{titling}
\usepackage{authblk}
\title{From Holst to Carroll Gravity, \\ a Hamiltonian point of view}

\author[1]{J. Fernando Barbero G.}
\author[2]{Juan Margalef-Bentabol\,}
\author[3,4]{Valle Varo\,}
\author[3,4]{Eduardo J.S. Villase\~nor\,}
\affil[1]{\href{https://ror.org/05rtchs68}{Instituto de Estructura de la Materia}, IEM-CSIC, Serrano 123, 28006 Madrid, Spain}
\affil[2]{Department of Mathematics and Statistics \href{https://ror.org/0161xgx34}{Universit\'e de Montr\'eal}, Montr\'eal, Qu\'ebec, Canada}
\affil[3]{\href{https://ror.org/03ths8210}{Universidad Carlos III de Madrid}, Departamento de Matem\'aticas, Avenida de la Universidad, 30
 (edificio Sabatini), 28911, Legan\'es (Madrid), Espa\~na}
\affil[4]{Grupo de Teor\'{\i}as de Campos y F\'{\i}sica Estad\'{\i}stica. Instituto Gregorio Mill\'an (UC3M). Unidad Asociada al Instituto de Estructura de la Materia, CSIC}
\date{}                     
\begin{document}
	
\maketitle
\date{June 30, 2026}

\vspace*{-8ex}

\begin{abstract}
The Carrollian regime of gravity provides a useful ultrarelativistic framework for studying asymptotically flat spacetimes, horizon dynamics, and condensed matter systems. In this paper, we present a comprehensive Hamiltonian analysis of the most general Carroll-invariant Lagrangian that can be derived from the Holst action. To guide the analysis, Cartan geometry tools are used. We identify the full constraint structure of the theory, characterize its gauge symmetries, obtain the explicit form of the Hamiltonian vector fields and introduce Ashtekar-like variables for the magnetic Carrollian regime. We also discuss in detail an analog of the time gauge employed in the Hamiltonian analysis of the Holst action for general relativity. 

\end{abstract}

%


\noindent
{\bf Key Words:}
Carrollian gravity; Group contractions; Gravitational actions.

%
%
	
\section{Introduction}\label{sec_Carrol_Introduction}

The Carrollian limit of gravitational theories provides an interesting setting to study a number of problems related to general relativity (GR) and cosmology such as the study of black hole horizons \cite{Donnay:2019jiz} and asymptotic null infinity \cite{D1}. It also finds interesting applications in other parts of physics \cite{figueroa2023carroll}. Many of these applications take advantage of the fact that in the Carrollian limit light cones degenerate into curves and, hence, the nature of causality in Carrollian spacetimes is somewhat simplified (although, arguably, it is changed in a rather non-trivial fashion). From a mathematical perspective, the main idea in the Carrollian setting is to introduce a contraction of the Poincar\'e algebra  within the gravitational models at hand along the lines first proposed by L\'evy-Leblond \cite{Levy-Leblond} and Sen Gupta \cite{sen_gupta}. In recent years there has been a renewed interest to understand the intricacies of the dynamics of the electric and magnetic sectors of Carrollian theories \cite{Campoleoni:2022ebj} and extend their applications (see recent papers, such as \cite{Ecker:2024czx, March:2024zck, deBoer:2023fnj, Bergshoeff:2025dhy, Henneaux:2021yzg}, for comprehensive lists of references). 

The purpose of this paper is to study in detail the Hamiltonian formulation for the most general Carrollian gravitational Lagrangian derivable from the Holst action  \cite{Holst:1995pc} for GR. This Lagrangian is the sum of the Carrollian gauge-invariant 4-forms given in \cite{Figueroa-OFarrill:2022mcy} (see also \cite{Bergshoeff:2017btm}). One application of the results that we present here is to provide relevant toy models for the canonical quantization of gravity. In fact, for the action discussed in this work, we will find that the phase space is related to the one corresponding to the Ashtekar formulation for GR \cite{Ashtekar:1986yd, Ashtekar:1987gu}. The only significant difference between the Hamiltonian formulation presented here and the standard Hamiltonian formulation for Lorentzian gravity in terms of real Ashtekar variables is the form of the Hamiltonian constraint: whereas in the case of GR this constraint can be written as the combination of two terms, one involving the curvature of the Ashtekar connection and the other the curvature of the spin connection built with the triads, in the Carrollian case only the last term is present. This is a simplification of sorts that may enable the canonical quantization of the action discussed here.

We would like to point out that the present work is not the first one to study the Hamiltonian formulation of a Carrollian gravity. For example, the equivalence between the magnetic Carrollian limit of Einstein gravity and the Carrollian theory of gravity was established in \cite{Campoleoni:2022ebj} by using Hamiltonian techniques.  Also, in \cite{Sengupta:2022rbd} the author performs a Hamiltonian analysis of the Carrollian limit of the Cartan-Palatini action for GR the action discussed in \cite{Bergshoeff:2017btm} by following a line of thought similar to the one presented in \cite{BarroseSa:2000vx} (in particular, relying on the use of a quadratic constraint) and taking a limit. There are several well-established methods to deal with singular (constrained) Lagrangian systems and obtain the corresponding Hamiltonian formulations. The best known of them is due to Dirac \cite{Dirac}. If followed to the letter, it always gives a Hamiltonian picture of dynamics equivalent to the one given from the action through the Euler-Lagrange equations. However, the implementation of the procedure is often tricky, in particular, the resolution of the equations that determine the undetermined multipliers introduced in the definition of the total Hamiltonian. If these equations are not solved, it is possible to miss constraints and arrive at incorrect results. A nagging feature of the approach followed in \cite{BarroseSa:2000vx} is the substitution of primary constraints linear in momenta by a quadratic one which is then used to define the total Hamiltonian. The reason for this is rather obscure but is probably justified by the appearance of dynamical objects in the primary constraints that are interpreted as Lagrange multipliers and, for this reason, considered (incorrectly) as non-dynamical objects. Another unnatural feature of the mentioned approach is that, instead of working with the natural variables in which the Cartan-Palatini action is written (a coframe and an spin connection) it uses a frame field. This unnecessarily complicates the form of the action and makes all the computations significantly harder.

The approach we follow here is different in spirit on several grounds. First, because we consider the most general Carrollian Lagrangian derivable from the Holst action (see \cite{Figueroa-OFarrill:2022mcy}), including a term with the Immirzi parameter. Second, from a methodological standpoint, we follow the clean and geometrically inspired approach developed by Gotay, Nester and Hinds (GNH) in \cite{Gotay1,Gotay2} to study constrained systems. This method provides an alternative to Dirac's approach, while yielding essentially equivalent information. Finally, it offers several computational and conceptual advantages. 

To complete the analysis of the relevant equations in the GNH setting (and, likewise, in Dirac's formalism), it is often useful to exploit geometric insights that clarify the structure of their solutions. In particular, to solve the equations that determine the components of the Hamiltonian vector fields, brute-force computations can always be used. However, such an approach frequently obscures the underlying structure of the solutions, making their interpretation difficult. It can also be challenging to verify the geometric consistency conditions that these vector fields must satisfy. A preliminary geometric analysis, together with the identification of the relevant symmetries, provides a clearer picture of the expected form of the Hamiltonian vector fields. This allows one to understand their structure \textit{a priori} and considerably simplifies the resolution of the corresponding equations. For the present paper, the geometric insight provided by Cartan geometry (see \cite{catren2015geometric,sharpe2000differential}) is especially useful and helpful.

The paper is organized as follows. In Section~\ref{sec_geometric}, we review the aspects of Cartan geometry that provide the geometric framework and motivation for our analysis. Section~\ref{sec_Action} introduces the action that we study in the paper, discusses its symmetries and field equations, and derives several immediate consequences for the dynamics. As we will show, a number of these features are characteristic of Carrollian systems. In Section~\ref{sec_Hamiltonian}, we develop the Hamiltonian formulation of the model using the GNH method. We identify the constraint structure, solve the equations defining the Hamiltonian vector fields, and establish the consistency of the resulting dynamics. Section~\ref{sec_time_gauge} is devoted to the implementation of the time gauge, which leads to a particularly simple set of canonical variables and a simplified description of the dynamics. We conclude in Section~\ref{sec_conclusions} with a discussion of our results and possible directions for future work. Appendix~\ref{sec_math_results} lists a number of useful mathematical facts that we use throughout the paper and summarizes our notation, and in Appendix~\ref{app_juan4} we discuss the transformations of the spacetime fields under a spacetime vector field decomposed relative to a spatial hypersurface.

%
%

\section{Geometric arena}\label{sec_geometric}

In this section, we describe Carrollian spacetimes within Cartan geometry since, as in GR, it provides the most convenient setting to work with connection-type variables.

A \((G,H)\)-Cartan geometry on a smooth manifold \(M\) is given by an \(H\)-principal bundle \(P\) endowed with a Cartan connection \(\omega\in\Omega^{1}(P,\mathfrak g)\), i.e.,\ a \(\mathfrak g\)-valued 1-form satisfying the following conditions (see \cite{sharpe2000differential,vcap2024parabolic,catren2015geometric} for further details):
\begin{enumerate}
    \item for every \(p\in P\), the map \(
\omega_p:T_pP\to\mathfrak g
\)  is a linear isomorphism; 
\item \(
(R^h)^*\omega=\mathrm{Ad}(h^{-1})\circ\omega,
\)
for all \(h\in H\), where \(R^h\) is the right action of \(h\) on \(P\);
\item \( \omega(X^*)=X,\)
for all \(X\in\mathfrak h\subset\mathfrak g\), where \(X^*\) is the fundamental vector field associated with \(X\).
\end{enumerate} 

This paper focuses on the Hamiltonian formalism, therefore we will restrict ourselves to trivial (and trivialized) principal bundles
\[
H \rightarrow P=M\times H \rightarrow M,
\]
equipped with the canonical section
\(s_e:M\to P\),  \(\, s_e(x)=(x,e)\),
where \(e\) denotes the identity element of \(H\). This choice allows us to pull back the Cartan connection to a \(\mathfrak g\)-valued 1-form on \(M\),
\[
W=s_e^*\omega\in\Omega^1(M,\mathfrak g).
\]
A different choice of section,
\(s_h(x)=(x,h(x))\),
with \(h:M\to H\) smooth, induces the gauge transformation 
\[
W\mapsto \mathrm{Ad}(h^{-1})\circ W+h^{-1}\ddd h.
\] 
The (vertical) automorphisms of a trivial bundle are $\mathrm{Gau}(M\times H)= C^{\infty}(M,H)$.

For the purposes of this paper, it is enough to assume that \(\mathfrak g\) is reductive, namely
\[
\mathfrak g=\mathfrak h\oplus\mathfrak m,
\]
which is an \(\mathrm{Ad}(H)\)-invariant decomposition. In particular, \(W\) decomposes as
\[
W=W_{\mathfrak h}+W_{\mathfrak m}.
\]
Under this decomposition, \(W_{\mathfrak{m}}\) and \(W_{\mathfrak{h}}\) respectively correspond to the solder form and the Ehresmann connection (pulled back to \(M\)).
Notice that, by the defining properties of a Cartan connection, the map
\begin{equation}\label{eq: W_m iso}
(W_{\mathfrak m})_x:T_xM\rightarrow\mathfrak m
\end{equation}
is a linear isomorphism for every \(x\in M\). Consequently, \(M\) must be parallelizable. This topological restriction is already implicit in the assumption that the principal bundle \(P\) is trivial. One could avoid this restriction by working with an open cover \(\{U_\alpha\}_\alpha\) of \(M\) by contractible open sets and gluing the corresponding local Cartan geometries using smooth transition functions \(h_{\alpha_1\alpha_2}\). We shall not consider this more general situation here because, as discussed in the next section, all the manifolds used in the paper are parallelizable.

In what follows, $
G=\mathrm{Carr}(1+3)=\mathrm{SO}(3)\ltimes \bigl(\mathbb{R}^3\ltimes (\mathbb{R}\times \mathbb{R}^3)\bigr)$
is the Carroll group and $H$ the closed subgroup  $
H\cong\mathrm{SO}(3)\times \mathbb{R}^3$ generated by the rotations and Carrollian boosts (see below). $H$ is the Lie group of the principal bundle associated with the Carrollian geometry (one for each 1-form $W$) that we will discuss in the paper. 

The Lie algebra  $\mathfrak{g}=\mathfrak{carr}(1+3)
=\mathfrak{so}(3)_J\ltimes
\bigl(\mathbb{R}^3_C\ltimes(\mathbb{R}_E\times\mathbb{R}^3_P)\bigr)$, where the subscripts refer to the standard generators, is given by
\begin{equation}\label{eq: algebra}
 \begin{array}{llll}
     [E,E]=0\,, \quad& [E,P_j]=0\,,\quad&[E,C_j]=0\,, &[E,J_j]=0\,,\\[.3ex]
      &[P_i,P_j]=0\,,&[P_i,C_j]=\delta_{ij}E\,,\quad&[P_i,J_j]=\varepsilon_{ijk}P^k\,,\\[.3ex]
      &&[C_i,C_j]=0\,,&[C_i,J_j]=\varepsilon_{ijk}C^k\,,\\[.3ex]
      &&&[J_i,J_j]=\varepsilon_{ijk}J^k\,.
 \end{array} 
\end{equation}
In practice, one can think of $X=X_J+X_C+X_E+X_P\in \mathfrak{carr}(1+3)$ as a matrix: 
\[
X=\left(\begin{array}{ccc}
   0 & 0^t & 0 \\
   X_P^ib_i  & X_J^iJ_i & 0\\
   X_E & -X_C^ib_i^t & 0
\end{array}\right)
\]
where 
\[
P_i=\left(\begin{array}{ccc}
   0 & 0 & 0 \\
 b_i  & 0 & 0\\
   0 & 0 & 0
\end{array}\right)\,,\quad 
C_i=\left(\begin{array}{ccc}
   0 & 0 & 0 \\
   0  &0 & 0\\
   0 & -b_i^t & 0
\end{array}\right)\,,\quad J_i\cong \left(\begin{array}{ccc}
   0 & 0 & 0 \\
   0  &J_i & 0\\
   0 & 0 & 0
\end{array}\right)\,,\quad E=\left(\begin{array}{ccc}
   0 & 0 & 0 \\
   0  &0 & 0\\
   1 & 0 & 0
\end{array}\right)
\]
and $b_i$ is the $i$-th element of the canonical basis of $\mathbb{R}^3$.  The subspaces 
\(
\mathfrak h=\mathrm{span}(J_i,C_i)\) and \(
\mathfrak m=\mathrm{span}(E,P_i),
\)
define the reductive decomposition
\[
\mathfrak{g}=\mathfrak{h}\oplus\mathfrak{m}\,,
\qquad\text{with}\qquad
\begin{aligned}
  [\mathfrak{h},\mathfrak{h}]\subset\mathfrak{h}\,, \qquad & [\mathfrak{h},\mathfrak{m}]\subset\mathfrak{m}\,, \\
  & [\mathfrak{m},\mathfrak{m}]=0\,.
\end{aligned}
\]
Using this reductive decomposition and the standard bases in each sector, we write
\begin{equation}\label{eq: W=Wh+Wm}
W=W_{\mathfrak h}+W_{\mathfrak m},
\qquad\qquad\begin{array}{l}
 W_{\mathfrak h}=\Aa+\Upomega
=\Aa^iJ_i+\Upomega^iC_i, \\ 
W_{\mathfrak m}=\ee+\uptau
=\ee^iP_i+\uptau E.
\end{array} 
\end{equation}
The Cartan curvature (see Appendix~\ref{sec_math_results} for the notation)
\[
F_W=\ddd W+\frac12[W\wed W]
\]
decomposes, projecting over $\mathfrak{h}$ and $\mathfrak{m}$,  as
\begin{equation}\label{eq: F_W split}
F_W=(F_W)_{\mathfrak h}+(F_W)_{\mathfrak m}
=F_{W_{\mathfrak{h}}}+\ddd_{W_{\mathfrak{h}}}W_{\mathfrak{m}}=(F_\Aa+\ddd_\Aa\Upomega)
+\bigl(\ddd_\Aa \ee+\ddd\uptau+[\Upomega\wed \ee]\bigr)
\end{equation}
The term in the second parentheses corresponds to the torsion 2-form of the solder form $W_{\mathfrak{m}}=\uptau+\ee$:
\begin{equation}
    T=\ddd_{W_{\mathfrak{h}}} W_{\mathfrak{m}} = \ddd_\Aa \ee + \ddd \uptau + [\Upomega \wed \ee]\label{torsion_W}
\end{equation}
Notice that, if we require the torsion to vanish, we have the condition $\ddd_{W_{\mathfrak{h}}} W_{\mathfrak{m}} = 0$, which as an equation for \( W_{\mathfrak{h}} \) might be incompatible. Moreover, even if a solution exists, it may not uniquely determine \( \Upomega \).

The 1-parameter gauge transformations 
\(W\mapsto \mathrm{Ad}(h_t^{-1})\circ W+h_t^{-1}\ddd h_t\) associated with $h_t=\exp(t X_\mathfrak{h})$, where $X_\mathfrak{h}\in \Omega^0(M,\mathfrak{h})$, define the infinitesimal gauge transformation  [notice that $\mathfrak{gau}(M\times H)=\Omega^0(M,\mathfrak{h})$]:
\[
\delta_{(X_{\mathfrak{h}})}W=\ddd_{W}X_{\mathfrak{h}}=\ddd X_{\mathfrak{h}} +[W,X_{\mathfrak{h}}],
\]
whose components are
\begin{align*}
   \delta_{(X_{\mathfrak{h}})} W_{\mathfrak{h}} :=(\delta_{(X_\mathfrak{h})} W)_{\mathfrak{h}}&=\ddd_{W_\mathfrak{h}} X_\mathfrak{h},\quad 
\delta_{(X_\mathfrak{h})} W_{\mathfrak{m}}:=(\delta_{(X_\mathfrak{h})} W)_{\mathfrak{m}}=-[X_{\mathfrak{h}}, W_{\mathfrak{m}}].
\end{align*}
These are the standard gauge transformations on the $H$-principal bundle, that must be gauge symmetries of our models. Typically, one considers the (non-$\mathrm{Ad}(H)$-invariant) gauge transformations of the coordinates $\{\Aa,\Upomega,\ee,\uptau\}$ associated with the basis $\{J,C,P,E\}$. Splitting the gauge parameter $X_{\mathfrak{h}}=X_J+X_C$ and using \eqref{eq: algebra}:
\begin{equation}\label{gauge_W}
\begin{aligned}
 &\delta_{(X_{\mathfrak{h}})}\Aa:= (\delta_{(X_{\mathfrak{h}})} W_{\mathfrak{h}})_J=\ddd_\Aa X_J\,,&&\delta_{(X_{\mathfrak{h}})}\ee := (\delta_{(X_{\mathfrak{h}})} W_{\mathfrak{m}})_{P}= -[X_J,\ee]\,,\\    &\delta_{(X_{\mathfrak{h}})} \Upomega:=(\delta_{(X_{\mathfrak{h}})} W_{\mathfrak{h}})_{C}= \ddd_\Aa X_C -[X_J,\Upomega]\,,&&
 \delta_{(X_{\mathfrak{h}})}\uptau :=(\delta_{(X_\mathfrak{h})} W_{\mathfrak{m}})_{E}=-[X_C,\ee]\,.
\end{aligned}
\end{equation}
The transformation \(
\delta_{(X_{\mathfrak{h}})}W=\ddd_{W}X_{\mathfrak{h}}\)  can be \textit{generalized} to \(\delta_{(X)}W=\ddd_{W}X\) for an arbitrary gauge parameter
\(X=X_{\mathfrak m}+X_{\mathfrak h}\),
where
\(X_{\mathfrak m}\)
and
\(X_{\mathfrak h}\)
denote the translational and vertical parts, respectively. This induces the transformations 
\begin{equation}\label{eq: generalized delta}
  \delta_{(X)} W_{\mathfrak{h}}  :=\ddd_{W_{\mathfrak{h}}} X_{\mathfrak{h}}\,,\qquad  \delta_{(X)} W_{\mathfrak{m}}  :=\ddd_{W_{\mathfrak{h}}} X_{\mathfrak{m}} -[X_{\mathfrak{h}},W_{\mathfrak{m}}]\,.
\end{equation}
The generalized gauge transformation admits a simple geometric interpretation. On the one hand, using Cartan's magic formula, the Lie derivative can be rewritten as
\begin{equation}\label{eq: L_xi Cartan}
    \pounds_\upxi W= \ddd_W(\upxi\iprod W)+\upxi\iprod F_W\,.
\end{equation}
On the other hand, 
 for any vector field $\upxi\in \mathfrak{X}(M)$, the Lie derivative of $W$ splits as
\begin{align*}
   \pounds_{\upxi}W&= ( \pounds_{\upxi}W)_{\mathfrak{h}}+ ( \pounds_{\upxi}W)_{\mathfrak{m}}= \pounds_{\upxi}W_{\mathfrak{h}}+  \pounds_{\upxi}W_{\mathfrak{m}}.
\end{align*}
Therefore, using \eqref{eq: F_W split} and \eqref{eq: L_xi Cartan}, these components are given by 
\begin{align*}
    \pounds_{\upxi} W_\mathfrak{h}
    &= \ddd_{W_\mathfrak{h}} (\upxi \iprod W_{\mathfrak{h}}) +\upxi \iprod \left(F_{W_\mathfrak{h}}\right) \overset{\eqref{eq: generalized delta}}{=} \delta_{(\upxi\iprod W)} W_{\mathfrak{h}} +\upxi \iprod \left(F_{W_\mathfrak{h}}\right)\,,\\
     \pounds_{\upxi} W_\mathfrak{m}
     &= \ddd_{W_\mathfrak{h}} (\upxi \iprod W_{\mathfrak{m}})-[(\upxi\iprod W_\mathfrak{h})\wed W_\mathfrak{m}]+\upxi\iprod \left(\ddd_{W_\mathfrak{h}}W_\mathfrak{m}\right) \overset{\eqref{eq: generalized delta}}{=}  \delta_{(\upxi\iprod W)} W_{\mathfrak{m}}+\upxi\iprod \left(\ddd_{ W_\mathfrak{h}} W_\mathfrak{m}\right).
\end{align*}

When the Cartan geometry is torsion-free, \(
\ddd_{W_\mathfrak h}W_\mathfrak m=0
\), and these expressions read
\begin{align*}
    \pounds_{\upxi} W_\mathfrak{h}
    &= \delta_{(\upxi\iprod W)} W_{\mathfrak{h}} +\upxi \iprod \left(F_{W_\mathfrak{h}}\right)\,,\\
     \pounds_{\upxi} W_\mathfrak{m}
     &=  \delta_{(\upxi\iprod W)} W_{\mathfrak{m}}\,.
\end{align*}
The previous reasoning relates generalized gauge transformations to infinitesimal diffeomorphisms of the base manifold. Explicitly, by writing a generalized gauge 0-form as $X=X_J+X_C+X_P+X_E$, we have the analog of \eqref{gauge_W} for generalized transformations:
\begin{align*}
 &\delta_{(X)} \Aa= \ddd_\Aa X_J\,,&&\delta_{(X)} \ee= \ddd_\Aa X_P -[X_J,\ee]\,,\\   
 &\delta_{(X)} \Upomega=\ddd_\Aa X_C -[X_J,\Upomega]\,,&&
 \delta_{(X)} \uptau = \ddd X_E -[X_P,\Upomega]- [X_C,\ee]\,.
\end{align*}
Therefore, identifying $X_J=\upxi\iprod \Aa$, $X_C=\upxi\iprod \Upomega$, $X_P=\upxi\iprod \ee$, and $X_E=\upxi\iprod \uptau$, the action of an infinitesimal diffeomorphism generated by $\upxi$ can be expressed as
\begin{subequations}\label{diff_W}
\begin{align}
 \begin{split}
   \pounds_\upxi \Aa &=  \ddd_\Aa(\upxi\iprod \Aa) +\upxi\iprod F_\Aa \\
   &= \delta_{(\upxi\iprod \Aa)} \Aa +\upxi\iprod F_\Aa \,,
 \end{split} \label{diff_W_A} \\[1ex]
 \begin{split}
   \pounds_\upxi \Upomega &=  \ddd_\Aa(\upxi\iprod \Upomega)-[(\upxi\iprod \Aa),\Upomega]+\upxi\iprod( \ddd_\Aa\Upomega) \\
   &= \delta_{(\upxi\iprod \Aa+\upxi\iprod\Upomega)} \Upomega +\upxi\iprod ( \ddd_\Aa\Upomega)\,,
 \end{split} \label{diff_W_Omega} \\[1ex]
 \begin{split}
   \pounds_\upxi \ee &=  \ddd_\Aa(\upxi\iprod \ee)-[(\upxi\iprod \Aa),\ee]+\upxi\iprod ( \ddd_\Aa\ee) \\
   &= \delta_{(\upxi\iprod \Aa +\upxi\iprod \ee)} \ee +\upxi\iprod ( \ddd_\Aa\ee)\,,
 \end{split} \label{diff_W_e} \\[1ex]
 \begin{split}
   \pounds_\upxi\uptau &= \ddd(\upxi\iprod \uptau)- [(\upxi\iprod \Upomega),\ee]-[(\upxi\iprod \ee),\Upomega] +\upxi\iprod(  \ddd\uptau +[\Upomega\wed \ee]) \\
   &= \delta_{(\upxi\iprod\Upomega+\upxi\iprod \ee+\upxi\iprod\uptau)}\uptau +\upxi \iprod ( \ddd\uptau +[\Upomega\wed \ee])\,.
 \end{split} \label{diff_W_tau}
\end{align}
\end{subequations}
For torsion-free geometries, the last two expressions get simplified:
\[
\pounds_\upxi \ee= \delta_{(\upxi\iprod \Aa +\upxi\iprod \ee)} \ee\,,\qquad \pounds_\upxi\uptau=\delta_{(\upxi\iprod\Upomega+\upxi\iprod \ee+\upxi\iprod\uptau)}\uptau\,.
\]
In this paper, we work with diff-invariant actions, therefore these expressions will show up in the analysis of symmetries within the Hamiltonian set up.

%
%
	
\section{The action}\label{sec_Action}
Let us consider $M=[t_1,t_2]\times\Sigma$ a 4-dimensional manifold foliated by ``spatial slices'' $\Sigma_t:=\{t\}\times\Sigma$ ($t\in[t_1,t_2]$), where $\Sigma$ is a closed, orientable, and 3-dimensional manifold (as a consequence, parallelizable). On $M$, we have a canonical evolution vector field $\partial_{\mathrm{t}}\in \mathfrak{X}(M)$, transverse to every $\Sigma_t$ and  defined by the tangent vectors to the curves $c_p:\mathbb{R}\rightarrow M=[t_1,t_2]\times \Sigma:t\mapsto (t,p)$; and a scalar field $t\in C^\infty(M)$ defined as $t:M\rightarrow \mathbb{R}:(\tau,p)\mapsto \tau$, such that ${\partial_{\mathrm{t}}}\iprod \bd t=1$. For each $t\in[t_1,t_2]$, we define the embedding $\jmath_t:\Sigma\rightarrow M:p\mapsto(t,p)$. Its pullback will be denoted as $\jmath_t^\ast$. Notice that we have $\Sigma_t=\jmath_t(\Sigma)$. 

\noindent Let us consider the action
\begin{equation}\label{action_Carroll}
S(\Aa,\Upomega,\ee,\uptau)=\frac{1}{2}\int_{M}\Big( \kappa^{-1}\big(2\uptau\wedge \langle \ee\wed F_\Aa \rangle-\langle [\ee\wed\ee]\wed\ddd_A\Upomega  \rangle \big)+\gamma^{-1}\langle [\ee\wed\ee]\wed F_\Aa \rangle \Big)\,.
\end{equation}
The basic fields are $\Aa, \Upomega,  \ee  \in\Omega^1(M,\mathfrak{su}(2))$ and $\uptau  \in\Omega^1(M)$, with \(F_{\Aa}:=\bd\Aa+\frac{1}{2}[\Aa\wed\Aa]\) and \(\ddd_\Aa\Upomega:=\bd\Upomega+[\Aa\wed\Upomega]\,.\) Here $\mathrm{d}$ denotes the exterior differential in $M$ and $\kappa$, $\gamma\neq0$ are real constants. Notice that $\gamma$ can be interpreted as the Immirzi parameter since it comes from the Holst term. 

Before we go ahead with the study of this action, some comments are in order. Although the original fields are $\mathfrak{carr}$-valued differential forms, when the basis defined in Section \ref{sec_geometric} is used to write the limit of the Holst action \cite{BarberoG:2025rxm}, leaving $\uptau$ aside, the resulting final expression is written in terms of $\mathbb{R}^3$-valued differential forms $\ee^i$, $\Upomega^i$, $\Aa^i$ (i.e. carrying a single “internal index”). They  appear combined in expressions of the form \(\ee^i\wedge  \Upomega^j \delta_{ij}\), \(\epsilon_{ijk} \ee^i \wedge \ee^j\wedge \Aa^k\), etc. that can be interpreted as $\langle \ee \wed \Upomega \rangle$, $[\ee\wed \ee]$, etc. In practice, these expressions are exactly the same as those obtained by working with fields \(\ee = \ee^i J_i\), \(\Upomega = \Upomega^i J_i\), and \(\Aa = \Aa^i J_i\) taking values in \(\mathfrak{su}(2)\), and by considering the operations used to write the action \eqref{action_Carroll}. However, it is possible to map the corresponding parts of the algebra onto $\mathfrak{su}(2)$ and effectively work as if all the fields used to write the action are $\mathfrak{su}(2)$-valued differential forms. Second, we want to note an interesting fact: the action \eqref{action_Carroll} has the form of the (anti)self-dual action for Euclidean GR with an extra term. Specifically, if we choose the coupling constants $\kappa$ and $\gamma$ such that $\kappa=-\gamma$ the first and third terms reproduce the self-dual action in the form appearing in \cite{BarberoG:1994fcp, BarberoG:2023qih}.

In addition to being invariant under 4-dimensional diffeomorphisms, the action \eqref{action_Carroll} is invariant under two classes of local internal transformations. The first corresponds to the infinitesimal $\mathfrak{su}(2)$ gauge rotations introduced in Section Section \ref{sec_geometric} [cf. Eq.~\eqref{gauge_W}], which act on the fields according to
\begin{equation}\label{internal_rotations}
\begin{aligned}
    &\delta^1_{(Z)}\Aa := \mathrm{d}_\Aa Z \,,\qquad\qquad&&\delta^1_{(Z)}\ee := -[Z, \ee] \,,\\
    &\delta^1_{(Z)}\Upomega := -[Z, \Upomega] \,,&&    \delta^1_{(Z)}\uptau := 0 \,,
\end{aligned}
\end{equation}
where $Z \in \Omega^0(\Sigma, \mathfrak{su}(2))$. The second class comprises the local Carrollian boosts parameterized by $y \in \Omega^0(\Sigma, \mathfrak{su}(2))$:
\begin{equation}\label{Carrollian_boosts}
\begin{aligned}
    &\delta^2_{(y)}\Aa:= 0 \,,&&\delta^2_{(y)}\ee := 0 \,,\\
    &\delta^2_{(y)}\Upomega := \mathrm{d}_\Aa y \,,\qquad\qquad&&    \delta^2_{(y)}\uptau := - \langle y, \ee \rangle \,.
\end{aligned}
\end{equation}
From a geometric perspective, the gauge transformations $\delta^1_{(Z)}$ and $\delta^2_{(y)}$ originate from the fundamental rotations ($J$) and Carrollian boosts ($C$) of the underlying kinematic algebra before performing the reduction to its $\mathfrak{su}(2)$ sector. 

As explicitly shown in Appendix \ref{app_juan4}, these internal symmetries are deeply intertwined with spacetime geometry. When performing a $1+3$ canonical splitting of the 4-dimensional manifold, the full spacetime diffeomorphisms systematically project onto the spatial hypersurface $\Sigma$ as a combination of spatial diffeos and these internal gauge and boost transformations, with the effective transformation parameters matching the fields' projections defined later in \eqref{shorthand}.                                   

The dynamics of the theory is encoded in the following set of field equations:
\begin{subequations}\label{field_eqs}
\begin{align}
  & \kappa^{-1}\ddd_\Aa[\ee\wed \ee]=0\,, \\
  & 2\kappa^{-1}\ddd_\Aa(\uptau\wedge \ee)-\kappa^{-1}[\Upomega\wed[\ee\wed\ee]]+\gamma^{-1} \ddd_\Aa[\ee\wed\ee]=0\,, \\
  & \gamma^{-1}[\ee\wed F_\Aa]-\kappa^{-1} F_\Aa\wedge\uptau-\kappa^{-1} [\ee\wed \ddd_\Aa\Upomega ]=0\,, \\
  & \kappa^{-1}\langle\ee\wed F_\Aa\rangle=0\,.
\end{align}
\end{subequations}

By clearing the constant coupling constants $\kappa$ and $\gamma$ these expressions can be simplified to:
\begin{subequations}\label{field_eqs_2}
\begin{align}
  & \ddd_\Aa[\ee\wed \ee]=0\,, \\
  & 2\ddd_\Aa(\uptau\wedge \ee)=[\Upomega\wed[\ee\wed\ee]]\,, \\
  & [\ee\wed(\kappa F_\Aa-\gamma \ddd_\Aa\Upomega )]=\gamma F_\Aa\wedge\uptau\,, \\
  & \langle\ee\wed F_\Aa\rangle=0\,.
\end{align}
\end{subequations}

The first equation imposes a covariant constancy condition on the spatial soldering forms. The second equation relates the covariant derivative of the mixed $\uptau\wedge\ee$ term to a quadratic expression in $\Upomega$ and $\ee$, effectively constraining the torsion. The third equation couples the curvature $F_\Aa$ with the covariant derivative of $\Upomega$, mixing the isotropy and translational sectors. Finally, the fourth equation provides an additional algebraic constraint on the curvature.

If in \eqref{field_eqs} we take the limit $\kappa\rightarrow\infty$, we recover the field equations for the HK model \cite{Husain:1990vz}:
\begin{subequations}
\begin{align}\label{field_eqs_HK}
  & [\ee\wed F_\Aa]=0\,, \\
  & \ddd_\Aa[\ee\wed\ee]=0\,.
\end{align}
\end{subequations}
These equations describe a theory related to GR with simple dynamics and a Hamiltonian formulation in terms of Ashtekar variables.

If, instead, we take the limit $\gamma\rightarrow\infty$, the field equations become
\begin{subequations}\begin{align}\label{field_eqs_EP}
  & \ddd_\Aa[\ee\wed \ee]=0\,, \\
  & 2\ddd_\Aa(\uptau\wed \ee)-[\Upomega\wed[\ee\wed\ee]]=0\,, \\
  & F_\Aa\wedge \uptau+[\ee\wed \ddd_\Aa\Upomega ]=0\,, \\
  & \langle\ee\wed F_\Aa\rangle=0\,.
\end{align}
\end{subequations}
These are the equations given by the Einstein-Cartan action.

It is worth noting that the field equations for Euclidean GR in terms of the fields used here can be written in the form
\begin{subequations}
\begin{align}\label{field_eqs_EuclGR}
  & [\ee\wed F_\Aa]=-F_\Aa\wedge\uptau\,, \\
  & 2\ddd_\Aa(\uptau\wedge\ee)=\ddd_\Aa[\ee\wed \ee]\,, \\
  & \langle\ee\wed F_\Aa\rangle=0\,.
\end{align}
\end{subequations}
where the terms that appear in the field equations coming from the action \eqref{action_Carroll} are mixed in interesting ways.

In order to facilitate the comparison of our results with the literature on the Hamiltonian treatment of gravitational actions similar to the one discussed here, we will take $\kappa=1$ but keep the Immirzi parameter $\gamma$ in the rest of the paper.
As shown in \ref{subsec_simplif}, the equations \eqref{field_eqs_2} can be simplified to 
\begin{subequations}\begin{align}
  & \ddd_\Aa\ee=0\,, \label{EOM_1}\\
  & \ddd\uptau+\langle\Upomega\wed \ee\rangle=0\,, \label{EOM_2}\\
  & F_\Aa\wedge \uptau+[\ee\wed\ddd_\Aa\Upomega]=0\,,\label{EOM_3}\\
  & \langle\ee\wed F_\Aa\rangle=0\,.\label{EOM_4}
\end{align}
\end{subequations}
Notice that the dependence on $\gamma$ drops out. This phenomenon is similar to the one that happens with the Holst action. As in that case, the term of the action controlled by the Immirzi parameter does not change the field equations. 

The system that we study in the paper admits an interesting geometric interpretation. From equation \eqref{torsion_W}, it is clear that equations \eqref{EOM_1} and \eqref{EOM_2} are equivalent to the vanishing of the torsion (a condition that ties together the translational and isotropy sectors). In particular, the first equation imposes the covariant constancy of the spatial coframe $\ee$, while the second relates the covariant derivative of the clock form $\uptau$ to a contraction involving $\Upomega$ and $\ee$  \cite{figueroa2020intrinsic}. The remaining equations \eqref{EOM_3} and \eqref{EOM_4} govern the curvature dynamics.

Although the main purpose of the paper is to study the Hamiltonian formulation for the action \eqref{action_Carroll}, there is interesting information that can be gleaned from the field equations. By introducing a foliation of the spacetime manifold $M$ (or using the one available by construction) and defining appropriate projected geometric objects in terms of $\{\Aa,\Upomega,\ee,\uptau\}$, it is possible to get a rather clear idea about how the Hamiltonian formulation will look like, both as far as the constraints and the equations of motion are concerned. In the rest of the section we will look at a specific feature of the dynamics given by \eqref{field_eqs_2} connected to a concrete description of the collapsed Carrollian light cones.

Let us restrict ourselves to solutions to the field equations such that $(\ee,\uptau)$ defines a (non-degenerate) coframe. We will then have a dual frame $(\mathsf{u}, \mathsf{E}_i)$ such that $\mathsf{u} \iprod\uptau=1$, $\mathsf{u}\iprod\ee=0$, $\mathsf{E}_i\iprod\uptau=0$ and $\mathsf{E}_i\iprod\ee^j=\delta^j_i$. From Lemma \ref{lemma_[e,B]=0} and $[F_\Aa\wed\ee]=0$ (which follows from $\ddd_A\ee=0$), we have $\uu\iprod\ddd_\Aa\ee=0$ and $\uu\iprod F_\Aa=0$. We can then use \eqref{diff_W} to obtain the Lie derivatives along $\mathsf{u}$ of solutions $\{\Aa,\Upomega,\ee,\uptau\}$ to the field equations \eqref{field_eqs_2}:
\begin{align*}
\pounds_{\uu}\Aa&=\ddd_{\Aa}(\uu\iprod\Aa) \,,\\
  \pounds_{\uu}\Upomega &=\ddd_{\Aa} (\uu\iprod\Upomega)-[(\uu\iprod\Aa)\wed\Upomega]-(*_\ee F_{\Aa})^\top\,,
  \\
  \pounds_{\uu}\ee\, &=-[(\uu\iprod\Aa)\wed \ee)] \,,\\
  \pounds_{\uu}\uptau\,&=- \langle(\uu\iprod\Upomega)\wed\ee\rangle \,.
\end{align*}
In the last equation, we have used $\pounds_{\uu}\uptau=\uu\iprod \ddd \uptau =-\uu\iprod \langle\Upomega\wed \ee\rangle $ whereas, in the second one, we have used $[\ee\wed(\mathsf{u}\iprod\ddd_{\Aa}\Upomega)]=F_\Aa$ and Lemmas  \ref{lemma_[e,B]=0} and \ref{lemma_[e,X]=C}, which makes use of the following notation: given $\mathsf{B}=\frac{1}{2} \tensor{\mathsf{B}}{_i_j^k} \ee^i\wedge \ee^j J_k$, we denote 
\begin{equation}\label{eq: star B and B^T}
 \star_\ee \mathsf{B}:=\frac{1}{2}\tensor{\epsilon}{_i^k^l}\tensor{\mathsf{B}}{_k _l ^j} (\ee^i J_j)\,,\qquad 
 (\star_\ee \mathsf{B})^\top:= \frac{1}{2}\tensor{\epsilon}{^i^k^l}\tensor{\mathsf{B}}{_k _l _j} (\ee^j J_i)
\end{equation} 
where $\star_{\ee}$ is the Hodge star operator defined with respect to the spatial coframe basis $\ee^i$, while, for a given $\mathsf{X}=\tensor{\mathsf{X}}{_i^j}\ee^iJ_j$, we define its ``transpose'' as $\mathsf{X}^\top=\tensor{\mathsf{X}}{^j_i}\ee^iJ_j$. 
 
As we can see, these expressions give a very neat picture of the dynamics of magnetic Carrollian gravity \cite{Campoleoni:2022ebj}. With the exception of $\pounds_{\uu}\Upomega$, where a contribution $-(*_{\ee} F_{\Aa})^\top$ is present, the rest of the Lie derivatives along solutions to the field equations can be interpreted as either internal $SU(2)$ rotations with parameter $(\uu\iprod\Aa)$ or Carrollian boosts with parameter $(\uu\iprod\Upomega)$. In view of this result, the integral curves of $\uu$ can be interpreted as the collapsed light cones.

We end this section by noting that the field equations do not imply the integrability of the distribution
\[
\mathcal{T}=\{v\in TM\,:\, \uptau(v)=0\}.
\]
Indeed, if we decompose
\(
\Upomega=\uptau\wedge (\mathrm{u}\iprod \Upomega)+\underline{\Upomega},
\)
and demand the integrability of $\mathcal{T}$:
\[
0=\uptau\wedge\mathrm{d}\uptau
\overset{\eqref{EOM_2}}{=}
-\uptau\wedge\langle \Upomega\wed \ee \rangle,
\]
we get the additional condition 
\[
\langle\underline{\Upomega}\wed \ee\rangle=0.
\]
We will recover an analog of this in Section~\ref{sec_time_gauge}, when we discuss the time gauge in the Hamiltonian formulation.

%
%
\section{The Hamiltonian formulation}\label{sec_Hamiltonian}

We now obtain the Hamiltonian formulation starting from the Lagrangian. To this end we take advantage of the fact that the spacetime manifold $M$ has the form $[t_1,t_2]\times\Sigma$ and use the scalar field $t\in C^\infty(M)$ and the vector field $\partial_{\mathrm{t}}$ canonically associated with $M$.  The action can be written as the integral of a 4-form $\mathcal{L}$ over $M$
\[
\int_{[t_1,t_2]\times\Sigma}\mathcal{L}=\int_{t_1}^{t_2}\mathrm{d}t\int_\Sigma \jmath_t^*\partial_{\mathrm{t}}\iprod \mathcal{L}\,.
\]
From this expression, we can read the Lagrangian $L:=\int_\Sigma\jmath_t^*\partial_{\mathrm{t}}\iprod \mathcal{L}$. This is a real function on the tangent bundle of the configuration space [here $\mathfrak{g}=\mathfrak{su}(2)$]
\[
Q\subset V
:=(\Omega^0(\Sigma,\mathfrak{g})\times \Omega^1(\Sigma,\mathfrak{g}))^3\times(\Omega^0(\Sigma)\times\Omega^1(\Sigma))\,.
\]
given by the elements of the form $q:=(e_{\mathrm{t}},e,A_{\mathrm{t}},A,\Omega_{\mathrm{t}},\Omega,\tau_{\mathrm{t}},\tau)$ subject to the non-degeneracy condition for $\{e_{\mathrm{t}},e,\tau_{\mathrm{t}},\tau\}$ discussed before [see equations \eqref{eq: W_m iso} and \eqref{eq: W=Wh+Wm}]. Since these restrictions can be expressed as inequalities, they define an open (and in fact dense) subset, thereby preserving the local dimensionality. The velocities have the form $\mathrm{v}_q:=(v_{e_{\mathrm{t}}},v_e,v_{A_{\mathrm{t}}},v_{A},v_{\Omega_{\mathrm{t}}},v_{\Omega},v_{\tau_{\mathrm{t}}},v_{\tau})$.

By computing $\jmath_t^*\partial_{\mathrm{t}}\iprod \mathcal{L}$ from the action \eqref{action_Carroll} we obtain the Lagrangian
\begin{align}\label{Lagrangian_Carroll}
\begin{split}
L(\mathrm{v}_q)&=\frac{1}{2}\int_\Sigma \Big(\,\,\langle v_A\wed(\gamma^{-1}[e\wed e]-2 e\wedge \tau)\rangle-\langle v_\Omega\wed[e\wed e]\rangle\\
& \hspace*{1.3cm}+\langle 2e_{\mathrm{t}},(\gamma^{-1}[e\wed F_A]-[e\wed\mathrm{d}_A\Omega]-(F_A\wedge \tau)) \rangle\phantom{\Big{|}}\\
&\hspace*{1.3cm}+\langle A_{\mathrm{t}}\wed(\gamma^{-1}\mathrm{d}_A[e\wed e]-[\Omega\wed [e\wed e]]-2\mathrm{d}_A(e\wedge \tau)) \rangle\\
&\hspace*{1.3cm}-\langle\Omega_{\mathrm{t}}\wed\mathrm{d}_A[e\wed e]\rangle+2\tau_{\mathrm{t}}\wedge\langle e\wed F_A\rangle\Big)\,.
\end{split}
\end{align}
Here we have introduced the curvature
$F_A:=\mathrm{d}A+\frac{1}{2}[A\wed A]$, and the covariant exterior differential $\mathrm{d}_A$ which acts on $\alpha\in\Omega^k(\Sigma,\mathfrak{g})$ as $
\mathrm{d}_A\alpha:=\mathrm{d}\alpha+[A\wed \alpha]\,.$

The fiber derivative of the Lagrangian $FL:TQ\rightarrow T^*Q$ is given by
\[
FL(\mathrm{v}_q)(\mathrm{w}_q)=\frac{1}{2}\int_\Sigma\left(\langle w_A\wed(\gamma^{-1}[e\wed e]-2 e\wedge \tau)\rangle-\langle w_\Omega\wed[e\wed e]\rangle\right)\,.
\]
This provides the definition of momenta. The non-zero components are 
\begin{align}\label{momenta}
\begin{split}
\,\,{\mathrm{p}}_{A}&=\frac{1}{2} (\gamma^{-1}[e\wed e]-2(e\wedge\tau))\\
{\mathrm{p}}_\Omega&=-\frac{1}{2} [e\wed e]
\end{split}
\end{align}
where we consider the obvious pairing to define their action on vectors. Since no velocities are involved, all the momenta can be removed and the primary constraint submanifold $\mathsf{M}_0$ is diffeomorphic to $Q$. In particular, the components of the momenta of a generic vector field $\mathbb{Y}$ in phase space can be rewritten in terms of the components of the configuration variables. 

The non-vanishing components of the vector fields tangent to the primary constraint submanifold are 
\begin{align}\label{tangent_vect}
\begin{split}
&{\mathrm{Y}}_{\!\mathrm{p}A}=\gamma^{-1}[Y_e\wed e]-\, Y_e\wedge \tau-\, e\wedge Y_\tau\,, \\
&{\mathrm{Y}}_{\!\mathrm{p}\Omega}=-[Y_e\wed e]\,,
\end{split}
\end{align}
Therefore, the canonical symplectic form pulled back to  $\mathsf{M}_0$  reads
\begin{align}\label{Omega_tan}
\begin{split}
\omega_0(\mathbb{X}_0,\mathbb{Y}_0)=&\int_\Sigma\Big(\langle Y_e\wed\big( \gamma^{-1}[e\wed X_A]-\,\tau\wedge X_A-[e\wed X_\Omega]  \big)  \rangle\\
&\hspace*{4mm}+\langle Y_A\wed\big( X_e\wedge\tau+\, e\wedge X_\tau-\gamma^{-1}[X_e\wed e]\big) \rangle\\
&\hspace*{4mm}+\langle Y_\Omega\wed[X_e\wed e]\rangle\\
&\hspace*{4mm}+Y_\tau\wedge\langle e\wed X_A\rangle\Big)\,.
\end{split}
\end{align}
for a generic field  $\mathbb{X}_0=(X_{e_{\mathrm{t}}}, \!X_e,X_{A_{\mathrm{t}}}, \!X_A,\!X_{\Omega_{\mathrm{t}}}, \!X_\Omega,\!X_{\tau_{\mathrm{t}}},\!X_\tau)$ and analogously for $\mathbb{Y}_0$.

Meanwhile, since the energy $E(v)= FL(v)(v)-L(v)$ only depends on the positions, it is equal to  the Hamiltonian which is given by
\begin{align}\label{Hamiltonian_Carroll}
\begin{split}
H(q)&=\frac{1}{2}\int_\Sigma \Big(\langle 2e_{\mathrm{t}},([e\wed\mathrm{d}_A\Omega]+(F_A\wedge \tau)-\gamma^{-1}[e\wed F_A]) \rangle\phantom{\Big{|}}\\
&\hspace*{1.2cm}+\langle A_{\mathrm{t}}\wed([\Omega\wed [e\wed e]]+2\mathrm{d}_A(e\wedge \tau)-\gamma^{-1}\mathrm{d}_A[e\wed e]) \rangle\\
&\hspace*{1.2cm}+\langle\Omega_{\mathrm{t}}\wed\mathrm{d}_A[e\wed e]\rangle-2\tau_{\mathrm{t}}\wedge\langle e\wed F_A\rangle\Big)\,.
\end{split}
\end{align}

The main equation that has to be solved in the GNH setting is $(\mathbb{X}\iiiprod\Omega-\dd H)|{\mathsf{M}_0}=0$  (here $\dd$ and $\iprod\!\!\!\!\iprod$ respectively denote the exterior and interior differentials in phase space). To this end, we compute $\dd H(\mathbb{Y}_0)$ for an arbitrary vector field $\mathbb{Y}_0$ tangent to $\mathsf{M}_0$ and compare it to \eqref{Omega_tan}. This leads to the following \textit{secondary constraints}
\begin{subequations}
\begin{align}
&[e\wed(\gamma\,\mathrm{d}_A\Omega- F_A)]+\gamma F_A\wedge \tau=0\,,\label{sec_const_2}\\
&\mathrm{d}_A(2\gamma\, e\wedge\tau-[e\wed e])+\gamma[\Omega\wed[e\wed e]]=0\,,\label{sec_const_1}\\
&\,\mathrm{d}_A[e\wed e]=0\,,\label{sec_const_3}\\
&\langle e\wed F_A\rangle=0\,,\label{sec_const_4}
\end{align}
\end{subequations}
We also obtain the equations for $(X_e,X_A,X_\Omega,X_\tau)$:
\begin{subequations}\label{eq: Hamiltonian X}
\begin{align}
    \begin{split}
        [e\wed X_A]\!-\!\gamma\tau\!\wedge\! X_A\!-\!\gamma[e\wed X_\Omega] = &\,[(\gamma\,\mathrm{d}_A\Omega- F_A)\wed e_{\mathrm{t}}]+\gamma\,\mathrm{d}_AA_{\mathrm{t}}\wedge\tau+[e\wed \mathrm{d}_AA_{\mathrm{t}}] \\
        &+\gamma[e\wed [A_{\mathrm{t}}\wed \Omega]]+\gamma[e\wed\mathrm{d}_A\Omega_{\mathrm{t}}]-\gamma\,\tau_{\mathrm{t}}F_A\,,
    \end{split}\label{eq_X_2}\\[.5ex]
    \begin{split}
        \gamma X_e\wedge\tau\!+\!\gamma e\wedge X_\tau\!-\![X_e\wed e] = &-\,\mathrm{d}_A[e_{\mathrm{t}}\wed e]+\gamma[(e\wedge\tau)\wed A_{\mathrm{t}}]+[[A_{\mathrm{t}}\wed e]\wed e] \\
        &\mbox{}\hspace{-8ex}-\gamma[[\Omega_{\mathrm{t}}\wed e]\wed e]\!+\!\gamma\,\mathrm{d}_A(e_{\mathrm{t}}\wedge\tau\!-\!\tau_{\mathrm{t}}\wedge e)\!+\!\gamma[\Omega\wed[e_{\mathrm{t}}\wed e]]\,,
    \end{split}\label{eq_X_1}\\[.5ex]
    [X_e\wed e] = &\,\,\mathrm{d}_A[e_{\mathrm{t}}\wed e]+[[e\wed A_{\mathrm{t}}]\wed e]\,,\label{eq_X_3}\\
    \langle e\wed X_A\rangle = &\,\langle F_A\wed e_{\mathrm{t}}\rangle+\langle e\wed \mathrm{d}_A A_{\mathrm{t}}\rangle\,.\label{eq_X_4}
\end{align}
\end{subequations}

They can be significantly simplified by replacing the unknowns $(X_e,X_A,X_\Omega,X_\tau)$ with $(Z_e,Z_A,Z_\Omega,Z_\tau)$ according to
\begin{align}\label{new_variables_1}
\begin{split}
X_{\!A}&=\pounds_\xi A+\mathrm{d}_A\Lambda+Z_A\,,\\
X_{\!\Omega}&=\pounds_\xi\Omega-[\Lambda\wed \Omega]+\mathrm{d}_A\psi+Z_\Omega\,,\\
X_e&=\pounds_\xi e-[\Lambda\wed e]+Z_e\,,\\
X_\tau&=\pounds_\xi \tau-\langle\psi\wed e\rangle+\mathrm{d}N+Z_\tau\,,
\end{split}
\end{align}
where we have used the shorthand notation
\begin{equation}\label{shorthand}
\begin{aligned}
\Lambda&:=A_{\mathrm{t}}-\xi\iprod A\,,\qquad&&N:=\tau_{\mathrm{t}}-\xi\iprod\tau\,,\\
\psi&:=\Omega_{\mathrm{t}}-\xi\iprod\Omega\,,\qquad&&\xi:=\langle e_{\mathrm{t}}\wed e\rangle\,.
\end{aligned}
\end{equation}
Geometrically, $(\Lambda,\psi)$ are the local gauge parameters that generate the internal gauge symmetries [see equations \eqref{internal_rotations} and \eqref{Carrollian_boosts}], while $(N,\xi)$ encapsulate the $4$-dimensional diffeomorphisms (they are related to the lapse-shift; we require $N\neq 0$). 

The redefinitions \eqref{new_variables_1} are dictated by the symmetries of the action and previous results on the Husain-Kucha\v{r} model \cite{BarberoG:2025rxm}. The equations for the Hamiltonian vector fields \eqref{eq: Hamiltonian X} in terms of new variables $(Z_e,Z_A,Z_\Omega,Z_\tau)$ are:
\begin{subequations}\label{eq: eq_Z}
\begin{align}
&[Z_A\wed e]-\gamma \tau\wedge Z_A-\gamma [e\wed Z_\Omega]=-\gamma N F_A\label{eq_Z_2}\,,\\
&Z_e\wedge\tau+e\wedge Z_\tau=- N \mathrm{d}_A e\label{eq_Z_1}\,,\\
&[Z_e\wed e]=0\label{eq_Z_3}\,,\\
&\langle e\wed Z_A\rangle=0\label{eq_Z_4}\,.
\end{align}
\end{subequations}
We sketch their resolution now.

\medskip

\noindent \textbf{Step 1:} Using Lemma \ref{lemma_[e,X]=C}, we can solve equation \eqref{eq_Z_3} and obtain $Z_e=0$.
\smallskip

\noindent \textbf{Step 2:} We solve now \eqref{eq_Z_1} which, on account of the previous result, simply reads $e\wedge Z_\tau=- N \mathrm{d}_A e$. This is an inhomogeneous equation which can only be solved if the following condition holds (see Appendix C of \cite{BarberoG:2021ekv})
\begin{equation}\label{consistency_cond}
e^{(i}\wedge \mathrm{d}_A e^{j)}=0\,,
\end{equation}
in which case the solution is (see \ref{subsec_topforms} for notation)
\[
Z_\tau=-\frac{1}{2}N\left(\frac{[\mathrm{d}_A e\wed e]^i}{\mathrm{vol}_e}\right)e_i\overset{\eqref{sec_const_3}}{=}0\,.
\]
This implies that the right hand side of \eqref{eq_Z_1} must vanish too. Hence, recalling that $N\neq0$ everywhere on $\Sigma$, we have the additional secondary constraints $\mathrm{d}_Ae=0$.
\smallskip

\noindent \textbf{Step 3:} The new constraints imply that \eqref{sec_const_3} and \eqref{consistency_cond} hold. Moreover, we can simplify \eqref{sec_const_1} as follows:
\begin{equation}\label{eq: 2e wedge torsion_1}
0=2e\wedge\mathrm{d}\tau+[\Omega\wed[e\wed e]]=2e\wedge(\mathrm{d}\tau+\langle\Omega\wed e\rangle)\,,
\end{equation}
where we have used the identity $[\Omega\wed[e\wed e]]=-2\langle\Omega\wed e\rangle\wedge e$. It is straightforward to see now that \eqref{eq: 2e wedge torsion_1} is equivalent to $\mathrm{d}\tau+\langle\Omega\wed e\rangle=0$.

Finally, noticing that $\mathrm{d}_Ae=0$ implies $[F_A\wed e]=0$, we can also simplify  \eqref{sec_const_2} leading to the secondary constraint submanifold $\mathsf{M}_1$ defined by
\begin{subequations}\label{sec_const}
\begin{align}
&\mathrm{d}_Ae=0\,,\label{sec_const_01}\\
&\mathrm{d}\tau+\langle\Omega\wed e\rangle=0\,,\label{sec_const_02}\\
&[e\wed\mathrm{d}_A\Omega]+F_A\wedge \tau=0\,,\label{sec_const_03}\\
&\langle e\wed F_A\rangle=0\,.\label{sec_const_04}
\end{align}
\end{subequations}
 Notice that, as expected from the simplified form of the field equations, these (eventually final) constraints do not depend on the Immirzi parameter.
\smallskip

\noindent \textbf{Step 4:} Before proceeding with the resolution of the equations for $Z_A$ and $Z_\Omega$, we will write down the consistency conditions that the Hamiltonian vector fields must satisfy. These are \textit{tangency} conditions that hold because the Hamiltonian vector fields must be tangent to $\mathsf{M}_1$. They are obtained by computing the Lie derivatives of the constraints along the Hamiltonian vector field. In terms of the variables $(Z_A,Z_\Omega)$ and using the constraints, they are
\begin{subequations}
\begin{align}
&[e\wed \mathrm{d}_AZ_\Omega]+[e\wed[Z_A\wed\Omega]]+\mathrm{d}_AZ_A\wedge\tau+F_A\wedge\mathrm{d}N=0\,,\label{tan_Z_2}\\
&\langle Z_\Omega\wed e\rangle=0\,,\label{tan_Z_1}\\
&[Z_A\wed e]=0\,,\label{tan_Z_3}\\
&\langle e\wed \mathrm{d}_AZ_A\rangle=0\,.\label{tan_Z_4}
\end{align}
\end{subequations}
These should be considered in conjunction with \eqref{eq: eq_Z}. The condition \eqref{tan_Z_3} implies $Z_A=0$. As a consequence, \eqref{tan_Z_4} is clearly true and \eqref{eq_Z_2} simplifies to
\[
[e\wed Z_\Omega]=NF_A\,,
\]
which is of the type that we have already encountered. Its solution is
\[
Z_\Omega^i=-N\left(\frac{e^i\wedge F_A^j}{\mathrm{vol}_e}\right)e_j\rightsquigarrow Z_\Omega\overset{\eqref{eq: star B and B^T}}{=}-N(\star_e F_A)^\top=-N(\star_e F_A)\,.
\]
Notice that this implies $\langle Z_\Omega\wed e\rangle=0$ because, from $[e\wed F_A]=0$ we get
\[
\langle Z_\Omega\wed e\rangle=-N\left(\frac{e^{[i}\wedge F_A^{j]}}{\mathrm{vol}_e}\right)e_j\wedge e_i=0\,.
\]
The previous results imply that the Hamiltonian vector fields must have the form given by (\ref{new_variables_1}) with $Z_e=0$, $Z_A=0$, $Z_\tau=0$ and $Z_\Omega=-N(\star_e F_A)$.
\smallskip

\noindent \textbf{Step 5:} At this point, the only thing left is to check that the tangency condition \eqref{tan_Z_2} holds, which can be immediately seen by plugging the values of $Z_A$ and $Z_\Omega$.
\smallskip

\noindent \textbf{Summary of the Hamiltonian formulation:}
The dynamics is given by the Hamiltonian vector field $\mathbb{X}$ with components
\begin{equation}\label{vector_fields_2}
\begin{aligned}
  X_e &= \pounds_\xi e-[\Lambda\wed e]\,, 
  && X_{e_\mathrm{t}} \text{ arbitrary}\,, \\
  X_A &= \pounds_\xi A+\mathrm{d}_A\Lambda\,, 
  && X_{A_\mathrm{t}}  \text{ arbitrary}\,, \\
  X_\Omega &= \pounds_\xi\Omega-[\Lambda\wed \Omega]+\mathrm{d}_A\psi-N(\star_e F_A)\,,
  && X_{\Omega_\mathrm{t}}  \text{ arbitrary}\,, \\
  X_\tau &= \pounds_\xi\tau+\mathrm{d}N-\langle\psi\wed e\rangle\,,
  && X_{\tau_\mathrm{t}} \text{ arbitrary}\,.
\end{aligned}
\end{equation}
It takes place on the submanifold $\mathsf{M}_1$ of the configuration space defined by
\begin{equation}\label{eq: constraint final}
\begin{aligned}
&\mathrm{d}_Ae=0\,,\\
&\mathrm{d}\tau+\langle\Omega\wed e\rangle=0\,,\\
&[e\wed\mathrm{d}_A\Omega]+F_A\wedge \tau=0\,,\\
&\langle e\wed F_A\rangle=0\,.
\end{aligned}
\end{equation}
Recall that the gauge parameters $(\Lambda,\Psi,N,\xi)$ are given by \eqref{shorthand} and that the configurations $(e_{\mathrm{t}}, e, A_{\mathrm{t}}, A, \Omega_{\mathrm{t}}, \Omega, \tau_{\mathrm{t}}, \tau)\in Q$ are subject to $e$ being non-degenerate and $N \neq 0$. These open conditions guarantee that $(e, e_{\mathrm{t}}, \tau, \tau_{\mathrm{t}})$ define a non-degenerate cotetrad. 

On $\mathsf{M}_1$, the induced presymplectic form \eqref{Omega_tan} reads  
\begin{equation}\label{presymplectic_Omega}
\omega_1= \int_\Sigma \left(\langle\dd \big(A-\gamma\Omega\big)\wwed\dd\big(\frac{1}{2\gamma}[e\wed e]\big)\rangle+\langle\dd A\wwed\dd(\tau\wedge e)\rangle\right)\,.
\end{equation}
As in the covariant picture provided by equations (\ref{EOM_1}-\ref{EOM_4}), the dynamics defined by the components of the Hamiltonian vector field can be interpreted as a combination of diffeomorphisms generated by $\xi$, internal $SU(2)$ rotations with parameter $\Lambda$ and Carrollian boosts with parameter $\psi$. In addition to this, there is non-trivial evolution defined by the two terms $-N(*_eF_A)$ and $\ddd N$ present in $X_\Omega$ and $X_\tau$, respectively. Notice that, at this point, the dynamics contains some arbitrary elements $X_{e_\mathrm{t}}$, $X_{A_\mathrm{t}}$, $X_{\Omega_\mathrm{t}}$ and $X_{\tau_{\mathrm{t}}}$. We discuss this issue in the next section.

%
%
\section{Time gauge}\label{sec_time_gauge}

An interesting and useful way to disentangle the meaning of the Hamiltonian dynamics of the system is to introduce appropriate gauge fixing conditions. This procedure can be understood in the context of reductions in principal bundles. Here we discuss a version of the time gauge customarily used to study the Holst action of GR. The idea is to impose the condition
\begin{equation}\label{time_gauge}
\tau=0
\end{equation}
This immediately tells us that $X_\tau=0$ which, in view of \eqref{vector_fields_2}, leads to
\begin{equation}\label{cond_Xtau}
0=\mathrm{d}N-\langle\psi\wed e\rangle=\mathrm{d}\tau_{\mathrm{t}}-\langle(\Omega_{\mathrm{t}}-\xi\iprod\Omega)\wed e\rangle\,.
\end{equation}
From this expression, we can solve for $\Omega_\mathrm{t}$ and write it in terms of $\{e,\Omega,\tau_\mathrm{t},e_\mathrm{t}\}$:
\begin{equation}\label{cond_Omega_t}
\Omega_{\mathrm{t}}=\xi\iprod\Omega+(\widetilde{E}\iprod \mathrm{d}\tau_\mathrm{t})/\mathsf{det}\,e\,.
\end{equation}
where we have introduced
\begin{align}
  \widetilde{E}&:=\frac{1}{2}\left(\frac{\cdot\wedge[e\wed e]}{\mathsf{vol}_0}\right)\,.\label{Etilde}
\end{align}
with $\mathsf{vol}_0$ a fiducial, non-dynamical volume form on $\Sigma$ and $\mathsf{det}\,e$ defined in Appendix \ref{subsec_topforms}. The component $X_{\Omega_\mathrm{t}}$ of the Hamiltonian vector field is no longer arbitrary and can be immediately derived from \eqref{cond_Omega_t}.

If we plug now the gauge fixing condition $\tau=0$ into the presymplectic form $\omega_1$ given by \eqref{presymplectic_Omega}, we  find
\begin{equation}\label{Omega_time_gauge}
\overline{\omega}_1=\int_\Sigma\langle\dd(A-\gamma \Omega)\wwed\dd\left(\frac{1}{2\gamma}[e\wed e]\right)\rangle\,,
\end{equation}
whereas, plugging $\tau=0$ into the constraints \eqref{eq: constraint final}, leads to
\begin{subequations}\label{sec_const_all}
\begin{align}
&\mathrm{d}_Ae=0\,,\label{sec_const_003}\\
&\langle e \wed \Omega \rangle=0\,,\label{sec_const_001}\\
&[e\wed\mathrm{d}_A\Omega]=0\,,\label{sec_const_002}\\
&\langle e\wed F_A\rangle=0\,.\label{sec_const_004}
\end{align}
\end{subequations}
The constraint \eqref{sec_const_003} has a simple interpretation: $A=\Gamma$ where $\Gamma$ is the Levi-Civita connection of the coframe $e$ given by the condition $0=\mathrm{d}_\Gamma e=\mathrm{d}e+[\Gamma \wed e]$.

The symplectic form \eqref{Omega_time_gauge} can be simplified using $0=\dd(\mathrm{d}_\Gamma e)=\mathrm{d}_\Gamma \dd e+[\dd\Gamma\wed e]$ and, hence,
\begin{align*}
\int_\Sigma\langle\dd\Gamma\wwed\dd[e\wed e]\rangle&=2\int_\Sigma\langle\dd\Gamma\wwed[\dd e\wed e]\rangle=2\int_\Sigma\langle[\dd\Gamma\wed e]\wwed\dd e\rangle=-2\int_\Sigma \langle\mathrm{d}_\Gamma\dd e\wwed\dd e\rangle\\
&=-\int_\Sigma \mathrm{d}\langle\dd e\wwed \dd e\rangle=0,
\end{align*}
because $\Sigma$ has no boundary. Using this once we plug $A=\Gamma$ into \eqref{Omega_time_gauge} leads to 
\begin{equation}\label{Omega_time_gauge_3}
\overline{\omega}_1=\int_\Sigma\langle\dd \left(\frac{1}{2}[e\wed e]\right)\wwed\dd\Omega\rangle\,.
\end{equation}
or, in terms of $\widetilde{E}$
\begin{equation}\label{Omega_time_gauge_4}
\widehat{\omega}_1=\int_\Sigma\langle\dd\widetilde{E}\wwed\dd\Omega\rangle\mathsf{vol}_0\,.
\end{equation}
The constraints \eqref{sec_const_all} in terms of the new variables are just simply obtained by writing $e$ in terms of $\widetilde{E}$ and setting  $A=\Gamma$:
\begin{subequations}
\begin{align}
&[\widetilde{E}\iproddot\Omega]=0\,,\label{sec_const_000001}\\
&\langle\widetilde{E}\iproddot\mathrm{d}_\Gamma\Omega\rangle=0\,,\label{sec_const_000002}\\
&\langle\widetilde{E}\iproddot[\widetilde{E}\iproddot F_\Gamma]\rangle=0\,.\label{sec_const_000004}
\end{align}
\end{subequations}
Finally, we have to find the form of the components of the Hamiltonian vector fields that give the dynamics of $\Omega$ and $\widetilde{E}$. In the case of $X_\Omega$ it is immediate to get 
\begin{equation}\label{X_Omega}
X_\Omega=\pounds_\xi\Omega-[\Lambda\wed \Omega]+\mathrm{d}_\Gamma\big((\widetilde{E}\iprod\mathrm{d}\tau_{\mathrm{t}})/\mathsf{det}\,e\big)-\tau_{\mathrm{t}}\ast_eF_{\Gamma}(\widetilde{E})\,.
\end{equation}
In order to find $X_{\widetilde{E}}$, we use $2\widetilde{E}\iprod\mathsf{vol}_0=[e\wed e]$ to get
\begin{align*}
X_{\widetilde{E}}\iprod\mathsf{vol}_0&=[X_e\wed e]=[(\pounds_\xi e-[\Lambda\wed e])\wed e]=\pounds_\xi\left(\frac{1}{2}[e\wed e]\right)-\frac{1}{2}[\Lambda\wed[e\wed e]]\\
&=\pounds_\xi(\widetilde{E}\iprod\mathsf{vol}_0)+[\Lambda\wed(\widetilde{E}\iprod \mathsf{vol}_0)]=\Big(\pounds_\xi\widetilde{E}+(\mathsf{div}_0\xi)\widetilde{E}+[\Lambda,\widetilde{E}]\Big)\iprod\mathsf{vol}_0
\end{align*}
where we have used the Jacobi identity (see Appendix \ref{subsec_Lie}). The preceding result immediately implies
\begin{equation}\label{X_Etilde}
X_{\widetilde{E}}=\pounds_\xi\widetilde{E}+(\mathsf{div}_0\xi)\widetilde{E}-[\Lambda,\widetilde{E}]\,.
\end{equation}
Notice that, at this point, there are still a number of dynamical variables with arbitrary dynamics such as $e_\mathrm{t}$, $A_\mathrm{t}$ and $\tau_\mathrm{t}$ that we must take care of. Before doing that, some comments are in order. First, the final form of the constraints is very similar to the constraints from which the Ashtekar formulation for GR is derived by the standard canonical transformation. In fact, the only difference is the scalar constraint \eqref{sec_const_000004}. The usual procedure would then allow us to find an Ashtekar formulation for this model that will only differ in the form of the Hamiltonian constraint. In fact, the form of \eqref{Omega_time_gauge} directly suggests to introduce the Ashtekar connection $\Gamma-\gamma\Omega$ which would be canonically conjugate to $\gamma^{-1}\widetilde{E}$.
Second, the Hamiltonian constraint found here coincides (if multiplied by an appropriate factor) with the extra term that must be added to the Euclidean Hamiltonian constraint in the Ashtekar formulation in order to describe Lorentzian GR. Finally, the constraints \eqref{sec_const_000001}-\eqref{sec_const_000004} coincide with the ones given in \cite{Sengupta:2022rbd} after using the time gauge. Notice, however, that the approach that we have followed here differs in several significant ways from the one of \cite{Sengupta:2022rbd}: (i) we use the most general Carrollian action that can be derived from the Holst action as shown in \cite{Figueroa-OFarrill:2022mcy}, (ii) we use differential forms as the basic variables. This makes the Hamiltonian analysis much simpler and allows us to employ the powerful machinery of the exterior calculus. Notice also that in our approach we do not need to introduce the (awkward) quadratic constraint used in the literature to get the Hamiltonian formulation for the Holst action \cite{BarroseSa:2000vx}.

 As written above, the expressions for the Hamiltonian vector fields are actually quite complicated because they are expressed in terms of $\xi$, $\Lambda$, $\psi$ and $\tau_\mathrm{t}$. Notice, for instance, that $\xi=\langle\widetilde{E}\iprod e_\mathrm{t}\rangle/(\mathsf{det} \, e)$ with $(\mathsf{det} \, e)$ written in terms of $\widetilde{E}$. 
 This issue can be solved by imposing additional gauge fixing conditions. In particular, by choosing a smooth vector field $\hat{\xi}\in\mathfrak{X}(\Sigma)$ on the spatial manifold $\Sigma$ and requiring that $\hat{\xi}\iprod e=e_\mathrm{t}$, we see that the components \eqref{X_Omega} and \eqref{X_Etilde} of the Hamiltonian vector field become
\begin{align}
&X_\Omega=\pounds_{\hat{\xi}}\Omega-[\Lambda\wed \Omega]+\mathrm{d}_\Gamma\big((\widetilde{E}\iprod\mathrm{d}\tau_{\mathrm{t}})/\mathsf{det}\,e\big)-\tau_{\mathrm{t}}\ast_eF_{\Gamma}(\widetilde{E})\label{X_Omega_hat}\\
&X_{\widetilde{E}}=\pounds_{\hat{\xi}}\widetilde{E}+(\mathsf{div}_0\hat{\xi})\widetilde{E}-[\Lambda,\widetilde{E}]\,.\label{X_Etilde_hat}
\end{align}
where there is still some arbitrariness as we have not fixed $\Lambda$ nor $\tau_{\mathrm{t}}$ yet. It is important to keep in mind that the condition $\hat{\xi}\iprod e=e_\mathrm{t}$ gives $e_{\mathrm{t}}$ as a function of $e$ and tells us also about its dynamics because 
\[
X_{e_{\mathrm{t}}}=\hat{\xi}\iprod(\pounds_{\hat{\xi}}e-[\Lambda\wed e])=\pounds_{\hat{\xi}}e_{\mathrm{t}}-[\Lambda\wed e_{\mathrm{t}}]\,.
\]
In order to reconstruct the original dynamical objects appearing in the action we need to know $A_\mathrm{t}$ and $\tau_{\mathrm{t}}$. An important issue to have in mind is that, within the time independent formalism that we are using here, this is not simply a matter of choosing the remaining arbitrary fields as any time function we wish. Although it is true that there is some arbitrariness in them, the evolution of an object such as $A_{\mathrm{t}}$ is given by an equation of the type
\[
\dot{A}_{\mathrm{t}}=X_{A_{\mathrm{t}}}
\] 
where $X_{A_{\mathrm{t}}}$ may be any sufficiently regular function \textit{in phase space}. A possible and consistent choice would be a smooth, nowhere-vanishing function $\hat{\tau}_{\mathrm{t}}\in \Omega^0(\Sigma)$ and $\hat{A}_{\mathrm{t}}\in \Omega^0(\Sigma,\mathfrak{su}(2))$. In such case, the corresponding components of the Hamiltonian vector fields vanish ($X_{\tau_{\mathrm{t}}}=0$ and $X_{A_{\mathrm{t}}}=0$). Another option for $\hat{A}_{\mathrm{t}}$ would be to set $\Lambda=0$ and, hence, the internal rotations disappear from the dynamics. Notice that it is only after this gauge fixing process that we get a simple interpretation of the dynamics defined by \eqref{X_Omega_hat} and \eqref{X_Etilde_hat} in terms of the gauge parameters $(\hat{\xi},\hat{\Lambda},\hat{\tau}_{\mathrm{t}},\hat{\Omega}_{\mathrm{t}})$. 

We end this section by emphasizing, again, that the presymplectic form \eqref{Omega_time_gauge} immediately suggests how to find a formulation for this model in the Ashtekar phase space for GR. As mentioned above, the Ashtekar variables can be directly read from $\bar{\omega}_1$. The computations leading to the Ashtekar formulation for this model are the standard ones, so we will not repeat them here (a simple account adapted to the language we use here can be found in \cite{BarberoG:2020tit}). Notice that this formulation can only be found because the action considered in the paper contains an extra term involving the Immirzi parameter.

%
%
\section{Conclusions and comments}\label{sec_conclusions}

In this paper, we have studied the Hamiltonian formulation for the most general Carroll invariant action that can be obtained from the Holst action. For this purpose, we have relied on the GNH approach. When this method (or the essentially equivalent Dirac approach) is used, one of the key steps is to find the explicit form of the Hamiltonian vector fields that define the dynamics and check its consistency, in the sense that they must preserve the constraints. The resolution of the linear inhomogeneous equations for the components of these fields is usually hard, in fact, a brute force approach usually leads to unwieldy forms for their solutions that are hard to simplify by making use of the constraints. This complicates the interpretation of the dynamics.

A neat way to circumvent these difficulties can be found by looking at the geometric underpinnings of the model and the meaning of the Carrollian symmetry. A convenient setting for this is provided by Cartan geometry and the tools that it provides. By understanding how the symmetry works and the meaning of relevant geometric objects such as the torsion and the curvature, it is actually possible to guess some of the terms that are expected to appear in the Hamiltonian vector fields \eqref{new_variables_1} and rewrite the relevant equations in terms of variables that simplify their solution in a significant way as shown in the paper. In particular, the expected gauge transformations originating in the Carrollian symmetry appear in a clear form.

We have also shown that it is possible to identify a dynamically determined congruence of curves such that (part of) the dynamics can be understood in simple terms because Lie dragging solutions to the field equations along these curves produces simple internal gauge transformations. This is similar to the behavior of the Husain-Kucha\v{r} model as discussed in \cite{BarberoG:2025rxm}. In any case, for the magnetic Carrollian model that the action considered in the paper describes, the evolution \textit{does not} simply reduce to internal rotations or Carrollian boosts, in particular for the $\Omega$ field. It is noteworthy to point out that the torsion vanishes.

From the perspective of the classical dynamics, our description of the system is complete: once the initial data for the basic fields is prescribed in such a way that they satisfy all the constraints, their evolution is dictated by the Hamiltonian vector field given there. On the other hand, if one is interested in the Dirac quantization of the model, it is necessary to find a symplectic formulation such that the symplectic form takes the simplest possible form (in practice, it must have the canonical $\mathrm{d}q\wedge \mathrm{d}p$ form). This requires some extra work. For the model discussed here, the time gauge that is successfully used for this purpose in other settings (for instance, when dealing with the Holst action) can be used to get a simple formulation as shown in Section \ref{time_gauge}. Furthermore, by following the same steps leading to the Ashtekar formulation for GR, it is possible to find an Ashtekar formulation for the model. This may help quantize it and illuminate the harder problem of quantizing GR.

\section*{Acknowledgments}

F. Barbero acknowledges the support of the WOST, WithOut SpaceTime project (https://withoutspacetime.org), supported by Grant ID\# 63683 from the John Templeton Foundation (JTF). The opinions expressed in this work are those of the author(s) and do not necessarily reflect the views of the John Templeton Foundation. J. Margalef-Bentabol is supported by NSERC grant DGECR-2026-00316.

\begin{appendices}

%
%
%

\section{Some useful mathematical results}\label{sec_math_results}

\subsection{Algebraic foundations and vector-valued forms}\label{subsec_Lie} 

Let $\Omega^p(\Sigma, \mathfrak{su}(2))$ denote the space of $\mathfrak{su}(2)$-valued $p$-forms on $\Sigma$. Given a basis $\{J_i\}_{i=1}^3$ of $\mathfrak{su}(2)$ satisfying $[J_i, J_j] = \tensor{\epsilon}{_i_j^k} J_k$, any $\alpha \in \Omega^p(\Sigma, \mathfrak{su}(2))$ expands as $\alpha = \alpha^i J_i$, where $\alpha^i \in \Omega^p(\Sigma)$.

We equip $\mathfrak{su}(2)$ with an inner product $\langle \cdot, \cdot \rangle := -\frac{1}{2} K(\cdot, \cdot)$, where $K$ is the standard Cartan-Killing metric. In our chosen basis, $\langle J_i, J_j \rangle = \delta_{ij}$. The exterior product and the Lie algebra bracket extend naturally to $\mathfrak{su}(2)$-valued forms:
\begin{align}
    \langle \alpha \wed \beta \rangle &:= \alpha^i \wedge \beta^j \langle J_i, J_j \rangle=\alpha^i\wedge \beta_i \in \Omega^{a+b}(\Sigma) \,, \\
    [\alpha \wed \beta] &:= \alpha^i \wedge \beta^j [J_i, J_j]= \alpha^i \wedge \beta^j \tensor{\epsilon}{_i_j^k}J_k \in \Omega^{a+b}(\Sigma, \mathfrak{su}(2)) \,.
\end{align}
Let $A \in \Omega^1(\Sigma, \mathfrak{su}(2))$ be a connection 1-form. The associated gauge-covariant exterior derivative $\mathrm{d}_A$ and its curvature $F_A \in \Omega^2(\Sigma, \mathfrak{su}(2))$ are defined by
\begin{align}
    \mathrm{d}_A \beta &:= \mathrm{d}\beta + [A \wed \beta] \,, \\
    F_A &:= \mathrm{d}A + \frac{1}{2}[A \wed A] \,.
\end{align}
Finally, for any $\alpha \in \Omega^a(\Sigma, \mathfrak{su}(2))$, $\beta \in \Omega^b(\Sigma, \mathfrak{su}(2))$, $\gamma \in \Omega^c(\Sigma, \mathfrak{su}(2))$, and $\xi \in \mathfrak{X}(\Sigma)$, the following identities hold by virtue of the graded Jacobi identity and the invariance of the inner product:
\begin{enumerate}
    \item $\langle \alpha \wed \beta \rangle = (-1)^{ab} \langle \beta \wed \alpha \rangle$
    \item $[\alpha \wed \beta] = (-1)^{ab+1} [\beta \wed \alpha]$
    \item $\langle [\alpha \wed \beta] \wed \gamma \rangle = \langle \alpha \wed [\beta \wed \gamma] \rangle$
    \item $\mathrm{d}_A [\alpha \wed \beta] = [\mathrm{d}_A \alpha \wed \beta] + (-1)^a [\alpha \wed \mathrm{d}_A \beta]$
    \item $\mathrm{d}_A \langle \alpha \wed \beta \rangle = \langle \mathrm{d}_A \alpha \wed \beta \rangle + (-1)^a \langle \alpha \wed \mathrm{d}_A \beta \rangle$
    \item $\xi \iprod [\alpha \wed \beta] = [(\xi \iprod \alpha) \wed \beta] + (-1)^a [\alpha \wed (\xi \iprod \beta)]$
    \item $\mathrm{d}_A^2 \alpha = [F_A \wed \alpha]$ \quad and \quad $\mathrm{d}_A F_A = 0$ (Ricci and Bianchi identities)
    \item $[\alpha\wed[\beta\wed\gamma]]=(-1)^{bc}\langle\alpha\wed\gamma\rangle\wedge\beta-\langle\alpha\wed\beta\rangle\wedge\gamma$
\end{enumerate}

\subsection{Top forms and frame dualities}\label{subsec_topforms}

Let $\Omega^3(\Sigma)$ be the space of top-forms on $\Sigma$ and consider a reference volume form $\mathsf{vol}_0 \in \Omega^3(\Sigma)$. The quotient of any top-form $\mu \in \Omega^3(\Sigma)$ with respect to $\mathsf{vol}_0$ is uniquely defined as the smooth scalar function $\left(\frac{\mu}{\mathsf{vol}_0}\right) = f$ satisfying $\mu = f \mathsf{vol}_0$. 

Let $e^i \in \Omega^1(\Sigma)$, $i=1,2,3$,  be a non-degenerate spatial cotriad. The dynamical spatial volume form $\mathsf{vol}_e$ and its corresponding density factor $\mathsf{det} \, e$ are given by
\begin{equation}
    \mathsf{vol}_e := \frac{1}{3!}\epsilon_{ijk} e^i \wedge e^j \wedge e^k \,, \quad \mathsf{det}\, e := \left(\frac{\mathsf{vol}_e}{\mathsf{vol}_0}\right) \,.
\end{equation}
Let $H := \frac{1}{2}[e\wed e] \in \Omega^2(\Sigma, \mathfrak{su}(2))$. We introduce the dual vector fields $\vec{E}^i \in \mathfrak{X}(\Sigma)$ and their densitized counterpart $\widetilde{E}^i$ relative to $\mathsf{vol}_0$ via the relations:
\begin{equation}
    \vec{E}^i \iprod \, \mathsf{vol}_e := H^i \,, \quad \widetilde{E}^i \iprod \, \mathsf{vol}_0 := H^i \,.
\end{equation}
These definitions yield the normalization $\vec{E}^i \iprod \, e_j = \delta^i_j$ and imply $\widetilde{E}^i = (\mathsf{det} \, e)\vec{E}^i$. By using these equations repeteadly, one gets:
\begin{align}
    e^i &= -\frac{1}{2\,\mathsf{det} \, e}\epsilon^{ijk}\widetilde{E}_j \iprod \, \widetilde{E}_k \iprod \, \mathsf{vol}_0 \,, \\
    (\mathsf{det} \, e)^2 &= -\frac{1}{3!}\epsilon^{ijk}\widetilde{E}_i \iprod \, \widetilde{E}_j \iprod \, \widetilde{E}_k \iprod \, \mathsf{vol}_0 \,.
\end{align}

\subsection{Simplification of the field equations}\label{subsec_simplif}

We establish here the equivalence between the field equations obtained directly by varying the action and their simplified forms.

\medskip

\begin{proposition}
The system of equations
\begin{subequations}\label{Equation_comp}
\begin{align}
    &\mathrm{d}_{\Aa} [\ee \wed \ee] = 0 \,, \label{Equation_comp_1}\\
    &2\mathrm{d}_{\Aa} (\uptau \wed \ee) - [\Upomega \wed [\ee \wed \ee]] = 0 \,, \label{Equation_comp_2}
\end{align}
\end{subequations}
is equivalent to
\begin{subequations}\label{Equation_simp}
\begin{align}
    &\mathrm{d}_{\Aa} \ee = 0 \,, \label{Equation_simp_1}\\
    &\mathrm{d}\uptau + \langle \Upomega\wed \ee \rangle = 0 \,. \label{Equation_simp_2}
\end{align}
\end{subequations}
\end{proposition}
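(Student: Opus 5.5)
We prove the two implications separately. The first, \eqref{Equation_simp}$\Rightarrow$\eqref{Equation_comp}, is direct computation. The second, \eqref{Equation_comp}$\Rightarrow$\eqref{Equation_simp}, reduces to two algebraic injectivity statements about wedging with the nondegenerate coframe $\ee$. Throughout we work on the 4-manifold $M$ and assume, as in the rest of the section, that $(\ee,\uptau)$ is a nondegenerate coframe.

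\emph{Direction \eqref{Equation_simp}$\Rightarrow$\eqref{Equation_comp}.}
\begin{itemize}
\item By identity 4 of Appendix~\ref{subsec_Lie}, $\ddd_\Aa[\ee\wed\ee]=2[\ddd_\Aa\ee\wed\ee]$. This vanishes when $\ddd_\Aa\ee=0$.
\item Since $\uptau$ is $\mathfrak{su}(2)$-neutral, $\ddd_\Aa(\uptau\wedge\ee)=\ddd\uptau\wedge\ee-\uptau\wedge\ddd_\Aa\ee$. With \eqref{Equation_simp_1} this reduces to $\ddd\uptau\wedge\ee$.
\item Identity 8 gives $[\Upomega\wed[\ee\wed\ee]]=-2\langle\Upomega\wed\ee\rangle\wedge\ee$, the same identity already used in \eqref{eq: 2e wedge torsion_1}.
\item Hence \eqref{Equation_comp_2} reads $2(\ddd\uptau+\langle\Upomega\wed\ee\rangle)\wedge\ee=0$, which holds by \eqref{Equation_simp_2}.
\end{itemize}

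\emph{Direction \eqref{Equation_comp}$\Rightarrow$\eqref{Equation_simp}.}
\begin{itemize}
\item Set $T:=\ddd_\Aa\ee\in\Omega^2(M,\mathfrak{su}(2))$ and $\theta:=\ddd\uptau+\langle\Upomega\wed\ee\rangle\in\Omega^2(M)$.
\item By the computations above, \eqref{Equation_comp_1} is equivalent to $[T\wed\ee]=0$.
\item Without assuming $T=0$, \eqref{Equation_comp_2} becomes $\theta\wedge\ee-\uptau\wedge T=0$. In components this is $\theta\wedge\ee^i=\uptau\wedge T^i$.
\item It therefore suffices to prove the following algebraic statement: on a 4-dimensional vector space with basis $(\uptau,\ee^1,\ee^2,\ee^3)$, the conditions $\epsilon_{ijk}T^i\wedge\ee^j=0$ and $\theta\wedge\ee^i=\uptau\wedge T^i$ force $T^i=0$ and $\theta=0$.
\end{itemize}

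To prove the algebraic statement, decompose every 2-form into a part containing $\uptau$ and a purely spatial part:
\[
T^i=\uptau\wedge a^i+b^i,\qquad \theta=\uptau\wedge\alpha+\beta,
\]
where $a^i,\alpha$ are spatial 1-forms and $b^i,\beta$ are spatial 2-forms.

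First consider $\theta\wedge\ee^i=\uptau\wedge T^i$.
\begin{itemize}
\item The purely spatial 3-form part gives $\beta\wedge\ee^i=0$ for all $i$. Since the $\ee^i$ span all spatial 1-forms, $\beta=0$.
\item The $\uptau$ part gives $\alpha\wedge\ee^i=b^i$ for each $i$.
\end{itemize}

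Next consider $[T\wed\ee]=0$, i.e. $\epsilon_{ijk}T^i\wedge\ee^j=0$.
\begin{itemize}
\item Its spatial part is $\epsilon_{ijk}b^i\wedge\ee^j=0$. Substituting $b^i=\alpha\wedge\ee^i$ yields $\alpha\wedge[\ee\wed\ee]=0$.
\item Writing $\alpha=\alpha_m\ee^m$, this says $\alpha_k\,\mathsf{vol}_\ee=0$ componentwise. Hence $\alpha=0$, so $\theta=0$ and $b^i=0$.
\item The $\uptau$ part is $\epsilon_{ijk}a^i\wedge\ee^j=0$. By Lemma~\ref{lemma_[e,X]=C}, as invoked in Step~1 of Section~\ref{sec_Hamiltonian} for $[Z_e\wed e]=0\Rightarrow Z_e=0$, this forces $a^i=0$.
\item Thus $T=0$, which completes the argument.
\end{itemize}

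The main obstacle I expect is this last injectivity claim. The map $a\mapsto[a\wed\ee]$ from $\mathfrak{su}(2)$-valued spatial 1-forms to $\mathfrak{su}(2)$-valued spatial 2-forms goes between spaces of equal dimension 9, so injectivity is not automatic and needs an argument. If the cited lemma does not apply verbatim, I will prove it by hand. Write $a^i=a^i{}_m\ee^m$. Then $\epsilon_{ijk}a^i{}_m\ee^m\wedge\ee^j=0$ amounts to $a^i{}_i\delta_{kl}-a^l{}_k=0$ (up to index placement). Tracing this equation gives $2\,a^i{}_i=0$, so the trace vanishes, and substituting back forces $a=0$. The same kind of trace-type computation also handles the step $\alpha\wedge[\ee\wed\ee]=0\Rightarrow\alpha=0$.
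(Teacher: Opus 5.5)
Your proof is correct and is essentially the paper's argument: both reduce the claim to pointwise linear algebra in the coframe $(\uptau,\ee^i)$, with your pieces $a^i,b^i$ of $T$ playing the role of the paper's $\alpha^i{}_j,\beta^i{}_j$, and your pieces $\alpha,\beta$ of $\theta$ encoding the paper's combinations $s_i-\phi_i$ and $2m_i-\epsilon_{ijk}\Upomega^{jk}$, so the steps match one to one. The only difference is bookkeeping: grouping $\ddd\uptau+\langle\Upomega\wed\ee\rangle$ into the single 2-form $\theta$ turns the system into the homogeneous equations $[T\wed\ee]=0$ and $\theta\wedge\ee=\uptau\wedge T$, which spares you the separate expansions of $\ddd\uptau$ and $\Upomega$ and makes the argument cleaner.
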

\begin{proof}
It is straightforward to check that \eqref{Equation_simp} implies \eqref{Equation_comp}. To prove the converse, we span $\mathrm{d}_\Aa  \ee$ and $\mathrm{d}\uptau$ in the basis components provided by the coframe $\{\uptau, \ee^i\}$:
\begin{subequations}
\begin{align}
    \mathrm{d}_\Aa \ee^i &= \tensor{\alpha}{^i_j} \ee^j \wedge \uptau + \tensor{\beta}{^i_j} \tensor{\epsilon}{^j_p_q} \ee^p \wedge \ee^q \,, \label{decomp1} \\
    \mathrm{d}\uptau &= s_i \ee^i\wedge \uptau + m^i \epsilon_{ijk} \ee^j \wedge \ee^k \,, \label{decomp2} \\
    \Upomega^i &= \tensor{\Upomega}{^i_j} \ee^j +\phi^i\uptau\,. \label{decomp3}
\end{align}
\end{subequations}
Substituting \eqref{decomp1} into \eqref{Equation_comp_1} we get $\epsilon_{ijk} (\mathrm{d}_{\Aa} \ee^j) \wedge \ee^k = 0$. Wedging this expression with $\uptau$ yields $\epsilon_{ijk}\beta^{jk} = 0$, while wedging with $\ee^r$ yields $\tensor{\alpha}{^r_i} - \tensor{\delta}{^r_i}\tensor{\alpha}{^j_j} = 0$, which implies $\tensor{\alpha}{^i_j} = 0$. 

Next, substituting these conditions into \eqref{Equation_comp_2} and exploiting the algebraic identity $-[\Upomega \wed [\ee \wed \ee]] = 2\langle \Upomega \wed \ee \rangle \wedge \ee$, we get:
\begin{equation}
    -s_j \uptau \wedge \ee^j \wedge \ee^i + m_j\epsilon^{jk\ell} \ee_k \wedge \ee_\ell \wedge \ee^i - \tensor{\beta}{^i_j}\epsilon^{jk\ell}\uptau \wedge \ee_k \wedge \ee_\ell - \Upomega^{jk}\ee_j\wedge \ee_k\wedge \ee^i+\phi^j\uptau\wedge \ee_j\wedge \ee^i = 0 \,.
\end{equation}
Wedging from the left with $\uptau$ gives $2m_i - \epsilon_{ijk}\Upomega^{jk} = 0$, while wedging from the right with $\ee^p$ we get $\epsilon^{ipk}s_k + 2\beta^{ip}-\epsilon^{ipk}\phi_k = 0$. Contracting this final relation with $\epsilon_{ip\ell}$ and using $\epsilon^{ijk}\beta_{jk}=0$ yields $s_i = \phi_i$, which implies $\beta_{ij} = 0$. Plugging $\alpha_{ij}=0$, $\beta_{ij}=0$, $s_i=\phi_i$ and $2m_i=\varepsilon_{ijk}\Omega^{jk}$ into \eqref{decomp1} and \eqref{decomp2} we immediately arrive at \eqref{Equation_simp_1} and \eqref{Equation_simp_2}. 
\end{proof} 
\begin{proposition}
    If $\mathrm{d}_{\Aa} \ee = 0$ holds, the field equation
\begin{equation}\label{eef}
    [\ee \wed (\kappa F_{\Aa} - \gamma \mathrm{d}_{\Aa} \Upomega)] = \gamma F_{\Aa} \wedge \tau
\end{equation}
is equivalent to
\begin{equation}\label{eee}
    F_{\Aa} \wedge \uptau + [\ee \wed \mathrm{d}_{\Aa} \Upomega] = 0 \,.
\end{equation}
\end{proposition}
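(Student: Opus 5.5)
The plan is to use the consequence of $\mathrm{d}_\Aa\ee=0$ already exploited in the main text, namely $[F_\Aa\wed\ee]=0$. Once this holds, the $\kappa$-dependent term in \eqref{eef} drops out, and after dividing by $\gamma\neq0$ the equation reduces to \eqref{eee} up to an overall sign.

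The first step is to derive $[F_\Aa\wed\ee]=0$. Apply $\mathrm{d}_\Aa$ to $\mathrm{d}_\Aa\ee=0$ and use the Ricci identity (item 7 of Appendix~\ref{subsec_Lie}). This gives
\[
0=\mathrm{d}_\Aa^2\ee=[F_\Aa\wed\ee].
\]
The graded antisymmetry (item 2), with $a=2$ and $b=1$, gives $[\ee\wed F_\Aa]=(-1)^{3}[F_\Aa\wed\ee]$, so $[\ee\wed F_\Aa]=0$ as well. Therefore
\[
[\ee\wed(\kappa F_\Aa-\gamma\,\mathrm{d}_\Aa\Upomega)]=-\gamma[\ee\wed\mathrm{d}_\Aa\Upomega].
\]

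Substituting this into \eqref{eef} gives $-\gamma[\ee\wed\mathrm{d}_\Aa\Upomega]=\gamma F_\Aa\wedge\uptau$, where the $\tau$ on the right-hand side of \eqref{eef} is read as $\uptau$. Since $\gamma\neq0$, this is equivalent to $F_\Aa\wedge\uptau+[\ee\wed\mathrm{d}_\Aa\Upomega]=0$, which is \eqref{eee}. Every step is an equivalence when $\mathrm{d}_\Aa\ee=0$ is assumed, because the only discarded term vanishes identically under that hypothesis. So the converse direction holds by reading the same chain of equalities backwards.

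The main obstacle is only the sign bookkeeping in the graded identities, and the argument needs nothing beyond $[\ee\wed F_\Aa]=0$. An alternative route is to cite Lemma~\ref{lemma_[e,B]=0}, as the main text does, but the Ricci identity suffices.
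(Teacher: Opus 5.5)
Your proof is correct and follows the paper's argument: $\mathrm{d}_\Aa\ee=0$ gives $[\ee\wed F_\Aa]=0$ (you supply the Ricci-identity derivation that the paper leaves implicit), so the $\kappa$-term drops, and dividing by $\gamma\neq0$ yields \eqref{eee}. One small correction to your closing aside: Lemma~\ref{lemma_[e,B]=0} assumes $[\ee\wed\mathsf{B}]=0$ as a hypothesis rather than deriving it, so it is not an alternative route to $[\ee\wed F_\Aa]=0$; the Ricci identity is the right tool here.
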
 
\begin{proof}
    Since $\mathrm{d}_{\Aa} \ee = 0$ implies the $[\ee \wed F_{\Aa}] = 0$, the term proportional to $\kappa$ vanishes. As $\gamma \neq 0$ we arrive at \eqref{eee}. 
\end{proof}

\subsection{Some useful lemmas on frame contractions}\label{subsec_lemma}

\begin{lemma}\label{lemma_[e,B]=0}
    Let $\mathsf{B} \in \Omega^2(M, \mathfrak{su}(2))$ on a 4-dimensional manifold $M$. Let $(\mathsf{u}, \mathsf{E}_i)$ be the frame field dual to the spacetime coframe $(\uptau, \ee^i)$, such that $\mathsf{u} \iprod \, \uptau = 1$ and $\mathsf{u} \iprod \, \ee^i = 0$. If $\mathsf{B}$ satisfies $[\ee \wed \mathsf{B}] = 0$ then 
\begin{equation}
\mathsf{u} \iprod \, \mathsf{B} = 0\,,\qquad  \mathsf{B}=\frac{1}{2} \tensor{\mathsf{B}}{_i_j^k} \ee^i\wedge \ee^j J_k
    \end{equation}
 and (see \eqref{eq: star B and B^T} for the notation)
\begin{equation}(*_\ee \mathsf{B}) =(*_\ee \mathsf{B})^\top\,.\end{equation}     
\end{lemma}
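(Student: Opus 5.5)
Expand $\mathsf{B}$ in the coframe $(\uptau,\ee^i)$ and read off the components of $[\ee\wed\mathsf{B}]=0$. Write
\[
\mathsf{B}^k=\tensor{b}{_i^k}\,\ee^i\wedge\uptau+\frac12\tensor{\mathsf{B}}{_i_j^k}\,\ee^i\wedge\ee^j ,
\]
where $\tensor{b}{_i^k}=\mathsf{E}_i\iprod\mathsf{u}\iprod\mathsf{B}^k$ up to sign. The first claim is then $\tensor{b}{_i^k}=0$. Next compute
\[
[\ee\wed\mathsf{B}]^m=\tensor{\epsilon}{_l_k^m}\,\ee^l\wedge\mathsf{B}^k .
\]
This is a 3-form, so it has two kinds of components: those containing $\uptau$, namely $\ee^l\wedge\ee^i\wedge\uptau$, and the purely spatial one, $\ee^l\wedge\ee^i\wedge\ee^j$.

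\textbf{The $\uptau$-components.} These give $\tensor{\epsilon}{_l_k^m}\tensor{b}{_i^k}$ antisymmetrized in $(l,i)$. Contracting with $\epsilon^{lip}$ turns the vanishing condition into
\[
\delta^{pm}\,\tensor{b}{_k^k}-b^{pm}=0
\]
(indices raised with $\delta$). Taking the trace gives $3\tensor{b}{_k^k}-\tensor{b}{_k^k}=0$, so the trace vanishes, and then $b^{pm}=0$. This is the same manipulation as in the proof of the first Proposition of \ref{subsec_simplif}, where $\alpha$ was shown to vanish. Hence $\mathsf{u}\iprod\mathsf{B}=0$ and $\mathsf{B}$ is purely spatial, as stated.

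\textbf{The spatial component.} It reads
\[
\tensor{\epsilon}{_l_k^m}\,\tensor{\mathsf{B}}{_i_j^k}\,\ee^l\wedge\ee^i\wedge\ee^j
=\tensor{\epsilon}{_l_k^m}\,\tensor{\mathsf{B}}{_i_j^k}\,\epsilon^{lij}\,\mathsf{vol}_\ee .
\]
Introduce the dual $M_l{}^k:=\frac12\epsilon_l{}^{ij}\tensor{\mathsf{B}}{_i_j^k}$, which is exactly the component matrix of $\star_\ee\mathsf{B}$ in \eqref{eq: star B and B^T}. The condition then becomes $\epsilon^{lkm}M_{lk}=0$, so $M$ is symmetric.

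Symmetry of $M$ is the statement $\star_\ee\mathsf{B}=(\star_\ee\mathsf{B})^\top$, once we check that the index placements in the two definitions of \eqref{eq: star B and B^T} differ only by the transposition $M_{ij}\leftrightarrow M_{ji}$ (Euclidean indices, so raising and lowering is trivial).

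The only delicate point is bookkeeping. We must match the convention $\mathsf{X}^\top=\tensor{\mathsf{X}}{^j_i}\ee^iJ_j$ against the two expressions in \eqref{eq: star B and B^T}, and confirm that it is precisely the antisymmetric part of $M$ that is killed by $[\ee\wed\mathsf{B}]=0$. I expect no genuine obstacle: the argument is purely algebraic and pointwise, and relies only on non-degeneracy of the coframe to read off components.
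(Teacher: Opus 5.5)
Your proposal is correct and follows essentially the same route as the paper. The paper isolates the $\uptau$-part by contracting $[\ee\wed\mathsf{B}]=0$ with $\mathsf{u}$ and wedging with $\uptau\wedge\ee^m$, which gives your condition $\delta\,\mathrm{tr}\,b-b=0$ and the same trace argument; it then reads off the symmetry of $\star_\ee\mathsf{B}$ from the purely spatial component exactly as you do, so the only difference (expanding in a basis of 3-forms rather than contracting and wedging) is cosmetic.
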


\begin{proof}
 Let us introduce the 4-dimensional spacetime volume form $\mathsf{vol} := \tau \wedge \mathsf{vol}_e$. Since $\mathsf{u} \iprod \, \ee^i = 0$, contracting the condition $[\ee \wed \mathsf{B}] = 0$ with the vector field $\mathsf{u}$ yields $[\ee \wed (\mathsf{u} \iprod \, \mathsf{B})] = 0$. Expanding the 1-form $\mathsf{u} \iprod \, \mathsf{B}= (\mathsf{u} \iprod \, \mathsf{B})^k J_k$  on the local coframe basis gives:
\begin{equation}
    (\mathsf{u} \iprod \mathsf{B})^k = b^k \tau + \tensor{b}{^k_\ell} e^\ell \,.
\end{equation}
Contracting a second time yields $\mathsf{u} \iprod \, (\mathsf{u} \iprod \, \mathsf{B})^k = 0$, which sets $b^k = 0$. The vanishing bracket condition then simplifies to $\epsilon_{ijk} \tensor{b}{^k_\ell} \ee^j \wedge  \ee^\ell = 0$. Taking the wedge product of this equation with $\uptau \wedge \ee^m$ turns the constraint into:
\begin{equation}
    \epsilon_{ijk} \tensor{b}{^k_\ell} \, \uptau \wedge \ee^m \wedge \ee^j \wedge \ee^\ell = 0 \Rightarrow \epsilon_{ijk}\varepsilon^{mj\ell} \tensor{b}{^k_\ell} = 0 \,.
\end{equation}
Using the standard identity $\epsilon_{ijk}\epsilon^{mj\ell} = \delta_i^\ell \delta_k^m - \delta_i^m \delta_k^\ell$, we find $\delta_i^m \tensor{b}{^k_k} - \tensor{b}{^m_i} = 0$. Tracing this relation implies $\tensor{b}{^m_i} = 0$, which proves that $\mathsf{u} \iprod \, \mathsf{B} = 0$. This condition implies that
\[
\mathsf{B}=\frac{1}{2} \tensor{\mathsf{B}}{_i_j^k} \ee^i\wedge \ee^j J_k\,.
\]
From  
\[
0=[\ee\wed \mathsf{B}]= \frac{1}{2} \tensor{\mathsf{B}}{_j_k^\ell} \ee^i\wedge\ee^j\wedge \ee^k[J_i,J_\ell]=\frac{1}{2}\epsilon^{ijk}\epsilon_{j\ell m}\tensor{\mathsf{B}}{_j_k^\ell} \ee^1\wedge \ee^2\wedge \ee^3 J_m
\]
it follows that
\[
 \star_\ee \mathsf{B}=\frac{1}{2}\tensor{\epsilon}{_i^k^l}\tensor{\mathsf{B}}{_k _l ^j} (\ee^i J_j)= \frac{1}{2}\tensor{\epsilon}{^i^k^l}\tensor{\mathsf{B}}{_k _l _j} (\ee^j J_i)=
 (\star_\ee \mathsf{B})^\top\,.
\]
\end{proof}
By a similar argument, we can prove the following
\begin{lemma}\label{lemma_[e,X]=C} The equation
\[
[\ee\wed \mathsf{X}] = \mathsf{C},
\]
where \(\mathsf{X}\in\Omega^1(M,\mathfrak{su}(2))\) satisfies
\(\mathrm{u}\iprod \mathsf{X}=0\), admits a solution if and only if
\(\mathrm{u}\iprod \mathsf{C}=0\). In this case, the solution is unique and is given by
\[
\mathsf{X}
=\frac{1}{2}\bigl(*_\ee \langle \ee\wed \mathsf{C}\rangle\bigr)\ee
-(*_\ee \mathsf{C})^\top.
\]

\end{lemma}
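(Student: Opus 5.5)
The proof has three parts: necessity, existence and uniqueness via a finite-dimensional linear-algebra argument, and verification of the explicit formula. Throughout, work in the coframe $\{\uptau,\ee^i\}$ with dual frame $(\mathsf{u},\mathsf{E}_i)$, exactly as in the proof of Lemma~\ref{lemma_[e,B]=0}.

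\emph{Necessity.} Contract $[\ee\wed\mathsf{X}]=\mathsf{C}$ with $\mathsf{u}$ and use identity 6 of Appendix~\ref{subsec_Lie}. This gives
\[
\mathsf{u}\iprod\mathsf{C}=[(\mathsf{u}\iprod\ee)\wed\mathsf{X}]-[\ee\wed(\mathsf{u}\iprod\mathsf{X})]=0,
\]
since $\mathsf{u}\iprod\ee=0$ and $\mathsf{u}\iprod\mathsf{X}=0$.

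\emph{Existence and uniqueness.} The condition $\mathsf{u}\iprod\mathsf{X}=0$ lets us write $\mathsf{X}=\mathsf{X}^j{}_\ell\,\ee^\ell J_j$, which has $9$ pointwise unknowns. Likewise $\mathsf{u}\iprod\mathsf{C}=0$ lets us write $\mathsf{C}=\tfrac12\mathsf{C}_{i\ell}{}^k\,\ee^i\wedge\ee^\ell J_k$, which has $3\times 3=9$ pointwise components. In these components the equation reads
\[
\mathsf{C}_{i\ell}{}^k=\epsilon_{ij}{}^k\mathsf{X}^j{}_\ell-\epsilon_{\ell j}{}^k\mathsf{X}^j{}_i .
\]
This is a linear map between spaces of equal dimension, so it is enough to prove it is injective. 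Injectivity is precisely the computation already done in the proof of Lemma~\ref{lemma_[e,B]=0}. Suppose $\epsilon_{ijk}\mathsf{X}^k{}_\ell\,\ee^j\wedge\ee^\ell=0$. Wedging with $\uptau\wedge\ee^m$ gives $\delta_i^m\mathsf{X}^k{}_k-\mathsf{X}^m{}_i=0$, and taking the trace forces $\mathsf{X}=0$. Hence the map is an isomorphism, which yields both existence and uniqueness.

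\emph{The explicit formula.} Rather than inverting the map abstractly, I will read $\mathsf{X}$ off from two contractions of $\mathsf{C}$. First, compute $(\star_\ee\mathsf{C})^\top$ using \eqref{eq: star B and B^T} and the identity $\epsilon^{mi\ell}\epsilon_{ijk}=\delta^\ell_j\delta^m_k-\delta^\ell_k\delta^m_j$. Its components should come out as
\[
\tfrac12\epsilon^{mi\ell}\mathsf{C}_{i\ell k}=\delta^m_k\,\mathsf{X}^j{}_j-\mathsf{X}^m{}_k ,
\]
that is, $(\star_\ee\mathsf{C})^\top=(\operatorname{tr}\mathsf{X})\,\ee-\mathsf{X}$ up to index placement. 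Second, compute $\langle\ee\wed\mathsf{C}\rangle$. It should be proportional to $(\operatorname{tr}\mathsf{X})\,\mathsf{vol}_\ee$, with the constant fixed by $\tfrac12[\ee\wed\ee]\wedge\ee$-type identities, so that $\tfrac12\,\star_\ee\langle\ee\wed\mathsf{C}\rangle=\operatorname{tr}\mathsf{X}$. Combining the two expressions eliminates the trace and gives
\[
\mathsf{X}=\tfrac12\bigl(\star_\ee\langle\ee\wed\mathsf{C}\rangle\bigr)\ee-(\star_\ee\mathsf{C})^\top .
\]

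The only real obstacle is bookkeeping. I need to fix the sign and normalization conventions consistently: the ordering of indices in $\star_\ee$ versus its transpose, the factor $\tfrac12$ in $\mathsf{C}$, and the normalization $\star_\ee\mathsf{vol}_\ee=1$. Under those conventions the trace coefficient must come out exactly $\tfrac12$. I will settle this by checking the final formula on the simple test case $\mathsf{X}=\ee$, where $\mathsf{C}=[\ee\wed\ee]$, before trusting the general index computation.
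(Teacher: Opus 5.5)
Your proposal is correct and is essentially the argument the paper has in mind when it says the lemma follows ``by a similar argument'' to Lemma~\ref{lemma_[e,B]=0}: coframe components, the same $\epsilon\epsilon$ contraction for injectivity, and (your addition) the $9=9$ dimension count for existence. The normalizations you were worried about do come out right, since $(\star_\ee\mathsf{C})^\top=(\operatorname{tr}\mathsf{X})\,\ee-\mathsf{X}$ and $\langle\ee\wed\mathsf{C}\rangle=\langle[\ee\wed\ee]\wed\mathsf{X}\rangle=2(\operatorname{tr}\mathsf{X})\,\mathsf{vol}_\ee$, so the stated formula returns $\mathsf{X}$.
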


\section{Spacetime diffeomorphisms and  gauge invariance}\label{app_juan4}

In this section, we explicitly compute the transformations of the spacetime fields $(\Aa, \Upomega, \ee, \uptau)$ under a spacetime vector field decomposed relative to a spatial hypersurface. Let $\jmath: \Sigma \hookrightarrow M$ be the embedding map of the spatial hypersurface, and consider a generic vector field $\upxi$ over $\jmath$ which can be decomposed as 
\begin{equation}
    \upxi = \upalpha \partial_{t} + \jmath_{*}\beta \,,
\end{equation}
where $\upalpha$ is a lapse-like scalar function, $\partial_t$ is the time-evolution vector field transverse to $\jmath(\Sigma)$, and $\beta \in \mathfrak{X}(\Sigma)$ is the spatial shift vector. Under the structural decomposition of the fields, the time derivatives are governed by combinations of spatial Lie derivatives along a reference vector field $\xi \in \mathfrak{X}(\Sigma)$ and internal gauge transformations. 

By expanding the spacetime Lie derivatives and using the pullback map $\jmath^*$, we isolate the spatial configurations from their transverse/temporal counterparts.

\bigskip

\noindent \textbf{Transformation of the connection $\Aa$:}
    \begin{align*}
        \jmath^{*}(\pounds_{\upxi}\Aa) &=\jmath^{*}((\upalpha \partial_t)\iprod\mathrm{d}\Aa+\mathrm{d}((\upalpha \partial_t)\iprod\Aa) +\jmath^{*}(\pounds_{\jmath_*\beta}\Aa)\\
        &= \alpha\dot{A}  + A_{\mathrm{t}} \mathrm{d}\alpha  + \pounds_{\beta}A\overset{\eqref{vector_fields_2}}{=} \\
        &= \alpha(\pounds_{\xi}A + \mathrm{d}_{A}\Lambda) + \pounds_{\beta}A + A_{\mathrm{t}}\mathrm{d}\alpha \\
        &= \pounds_{\alpha \xi + \beta}A + \mathrm{d}_{A}(\alpha \Lambda)\overset{\eqref{vector_fields_2}}{=} \dot{A}|_{\xi\mapsto \alpha \xi+\beta,\Lambda\mapsto \alpha\Lambda} \,.
    \end{align*}
    
\noindent \textbf{Transformation of the cotriad $\mathbf{e}$:}
    \begin{align*}
        \jmath^{*}(\pounds_{\upxi}\ee) &= \alpha \dot{e} + e_{\mathrm{t}}\mathrm{d}\alpha + \pounds_{\beta}e \overset{\eqref{vector_fields_2}}{=}\\
        &= \pounds_{\alpha \xi + \beta}e - [(\alpha \Lambda) \wed e] \overset{\eqref{vector_fields_2}}{=} \dot{e}|_{\xi\mapsto \alpha \xi+\beta,\Lambda\mapsto \alpha\Lambda} \,.
    \end{align*}
    
\noindent\textbf{Transformation of the connection component $\Upomega$:}
    \begin{align*}
        \jmath^{*}(\pounds_{\upxi}\Upomega) &= \alpha \dot{\Omega}  + \Omega_{\mathrm{t}}\mathrm{d}\alpha + \pounds_{\beta}\Omega\overset{\eqref{vector_fields_2}}{=}\\ 
        &= \pounds_{\alpha \xi + \beta} \Omega - [(\alpha \Lambda) \wed \Omega ] - \alpha N (*_e  F_A ) + \mathrm{d}_{A}(\alpha \psi) \overset{\eqref{vector_fields_2}}{=} \dot{\Omega}|_{\xi\mapsto \alpha \xi+\beta,\Lambda\mapsto \alpha\Lambda,\psi\mapsto \alpha\psi,N\mapsto \alpha N} \,.
    \end{align*}
    
\noindent \textbf{Transformation of the clock field $\uptau$:}
    \begin{align*}
        \jmath^{*}(\pounds_{\upxi} \uptau ) &= \alpha \dot{\tau} + \pounds_{\beta}\tau  + \tau_{\mathrm{t}}\mathrm{d}\alpha \overset{\eqref{vector_fields_2}}{=}\\ 
        &= \pounds_{\alpha \xi + \beta} \tau + \mathrm{d}(\alpha N)  - \langle (\alpha \psi) \wed e  \rangle  \overset{\eqref{vector_fields_2}}{=} \dot{\tau}|_{\xi\mapsto \alpha \xi+\beta,\psi\mapsto \alpha\psi,N\mapsto \alpha N} \,.
    \end{align*}
    
Geometrically, this systematic reduction demonstrates that the full 4-dimensional spacetime symmetries project consistently down to the spatial hypersurface. The resulting field variations are equivalent to a combined spatial diffeomorphism along the modified shifted vector field $\alpha\xi + \beta$, supplemented by local internal gauge rotations parameterized by the fields $\alpha\Lambda$, $\alpha\psi$, and $\alpha N$.

\end{appendices}

%
%

\end{document}